
\documentclass[UKenglish]{lmcs}
\pdfoutput=1

\usepackage{lastpage}

\lmcsheading{}{1--\pageref{LastPage}}{}{}%
{Oct.~12,~2017}{Sep.~03,~2018}{}

\usepackage{microtype}
\usepackage{amsmath,amssymb,amsfonts,mathrsfs}
\usepackage[all,2cell]{xy}
\usepackage{graphicx}
\usepackage{hyperref}
\usepackage{anyfontsize}


\newcommand{\tr}[1]{\xrightarrow{#1}}
\newcommand{\tl}[1]{\xleftarrow{#1}}
\newcommand{\maps}{{\colon}}

\def \catC {\mathcal{C}}
\def \subc {\mathcal{A}}
\newcommand{\op}[1]{{#1}^{\scriptscriptstyle op}}
\newcommand{\Span}[1]{\mathsf{Span}(#1)}
\newcommand{\Spanc}[1]{\mathsf{Span}(#1)}
\newcommand{\Cospan}[1]{\mathsf{Cospan}(#1)}
\newcommand{\Rel}[1]{\mathsf{Rel}(#1)}
\newcommand{\Corel}[1]{\mathsf{Corel}(#1)}
\def \PROP {\mathbf{Prop}} 
\def \CAT {\mathbf{Cat}} 
\def \SET {\mathsf{Set}} 
\def \poi {\,\ensuremath{;}\,} 
\def \tns {\ensuremath{\oplus}} 
\def \id {\mathit{id}} 

\newcommand\pair[1]{\langle #1 \rangle} 
\newcommand\copair[1]{[#1]} 
\newcommand\ord[1]{\overline{#1}}

\def \CospanToCorel {\Gamma}
\def \SpanToCorel {\Pi}
\newcommand \+[1] {+_{\scriptscriptstyle \lvert #1 \rvert}}

\renewcommand{\fact}[2]{({#1} , {#2})}
\newcommand{\Cepi}{\mathcal{E}} 
\newcommand{\Cmono}{\mathcal{M}} 
\newcommand{\Ciso}{\mathcal{I}} 

\newcommand{\Triv}{\mathbf{1}} 
\newcommand{\F}{\mathsf{F}} 
\newcommand{\Inj}{\mathsf{In}}  
\newcommand{\Surj}{\mathsf{Su}} 
\newcommand{\ER}{\mathsf{ER}}  
\newcommand{\PER}{\mathsf{PER}}  
\def \PF {\mathsf{PF}} 
\newcommand \Vect[1] {\mathsf{Vect}_{\scriptscriptstyle #1}}
\newcommand \vect[1] {\mathsf{VECT}_{\scriptscriptstyle #1}}
\newcommand \SV[1] {\mathsf{SV}_{\scriptscriptstyle #1}}
\def \PID {\mathsf{R}}
\def \field {\mathsf{k}} 
\def \frPID {\mathsf{k}} 
\def \Z {\mathsf{Z}}

\newcommand{\fmod}[1]{\mathsf{FMod}_{\scriptscriptstyle #1}}
\newcommand{\mfmod}[1]{\mathsf{MFMod}_{\scriptscriptstyle #1}} 


\newcommand{\pullbacktop}[4]{%
{#1} \ar@/_/[ddr]_{#4} \ar@/^/[drr]^{#2}%
\ar@{.>}[dr]|-{#3} \\}

\newcommand{\pullbackcorner}[1][dr]{\save*!/#1+1.5pc/#1:(1,-1)@^{|-}\restore}

\newcommand{\cgr}[2][scale=0.45]{\raisebox{0.1em}{\begingroup
\setbox0=\hbox{\includegraphics[#1]{graffles/#2}}%
\parbox{\wd0}{\box0}\endgroup}}

\newcommand\Bmult{\lower5pt\hbox{$\includegraphics[width=20pt]{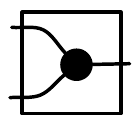}$}}
\newcommand\Bcomult{\lower5pt\hbox{$\includegraphics[width=20pt]{graffles/Bcomult.pdf}$}}
\newcommand\Bunit{\cgr[height=10pt]{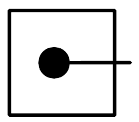}}
\newcommand\Bcounit{\cgr[height=10pt]{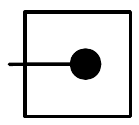}}
\newcommand\EmptyDiag{\cgr[height=10pt]{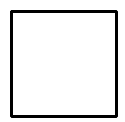}}
\newcommand\IdDiag{\cgr[height=12pt]{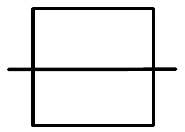}}
\newcommand\scalar{\cgr[height=12pt]{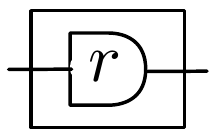}}
\newcommand\scalarop{\cgr[height=12pt]{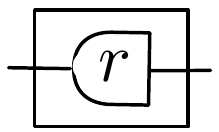}}

\theoremstyle{definition}
\newtheorem{ass}[]{Assumption}
\newtheorem{defn}[]{Definition}
\newtheorem{example}[]{Example}[section]
\newtheorem{rmk}[]{Remark}[section]

\begin{document}

\title[Universal Constructions for (Co)Relations]{Universal Constructions for
(Co)Relations: \\ categories, monoidal categories, and props}
\titlecomment{{\lsuper*}Extended version of \textsl{A Universal Construction for
(Co)Relations}, CALCO 2017.}

\author[B.~Fong]{Brendan Fong}
\address{Massachusetts Institute of Technology, United States of America}
\email{bfo@mit.edu}
\thanks{We thank David Spivak for useful conversations. BF acknowledges support
from the Basic Research Office of the USA ASDR\&E through ONR N00014-16-1-2010,
AFOSR FA9550-14-1-0031, and AFOSR FA9550-17-1-0058. FZ acknowledges support from EPSRC grant n. EP/R020604/1.
}

  \author[F.~Zanasi]{Fabio Zanasi}
\address{University College London, United Kingdom}
\email{F.Zanasi@cs.ucl.ac.uk}

\keywords{corelation, prop, string diagram}


\begin{abstract}
Calculi of string diagrams are increasingly used to present the syntax and
algebraic structure of various families of circuits, including signal flow
graphs, electrical circuits and quantum processes. In many such approaches, the
semantic interpretation for diagrams is given in terms of relations or
corelations (generalised equivalence relations) of some kind. In this paper we
show how semantic categories of both relations and corelations can be
characterised as colimits of simpler categories. This modular perspective is
important as it simplifies the task of giving a complete axiomatisation for
semantic equivalence of string diagrams. Moreover, our general result unifies
various theorems that are independently found in literature and are relevant for program semantics, quantum computation and control theory.
\end{abstract}

\maketitle

\section{Introduction}

Network-style diagrammatic languages appear in diverse fields as a tool to
reason about computational models of various kinds, including signal processing
circuits, quantum processes, Bayesian networks and Petri nets, amongst many
others. In the last few years, there have been more and more contributions
towards a uniform, formal theory of these languages which borrows from the
well-established methods of programming language semantics. A significant
insight stemming from many such approaches is that a \emph{compositional}
analysis of network diagrams, enabling their reduction to elementary components,
is more effective when system behaviour is thought of as a \emph{relation} instead of a function. 

A paradigmatic case is the one of signal flow graphs, a foundational structure
in control theory: a series of recent
works~\cite{Bonchi2014b,BaezErbele-CategoriesInControl,Bonchi2015,BonchiSZ17,Fong2015}
gives this graphical language a syntax and a semantics where each signal flow
diagram is interpreted as a subspace (a.k.a. linear relation) over streams. The
highlight of this approach is a sound and complete axiomatisation for semantic
equivalence: what is of interest for us is how this result is achieved
in~\cite{Bonchi2014b}, namely through a \emph{modular} account of the domain of
subspaces. The construction can be studied for any field $\field$: one
considers the prop\footnote{A prop is a symmetric monoidal category with
objects the natural numbers \cite{MacLane1965}. It is the typical setting for
studying both the syntax and the semantics of network diagrams.} $\SV{\field}$
whose arrows $n \to m$ are subspaces of $\field^n \times \field^m$, composed as
relations. As shown in in~\cite{interactinghopf,ZanasiThesis}, $\SV{\field}$
enjoys a universal characterisation: it is the pushout (in the category of
props) of props of \emph{spans} and of \emph{cospans} over $\Vect{\field}$, the prop
with arrows $n \to m$ the linear maps $\field^n \to \field^m$:
\begin{equation}\label{intro:linearcube}
\begin{aligned}
    \xymatrix@R=18pt{
    {\Vect{\field} + \op{\Vect{\field}}} \ar[r]
    \ar[d]& {\Span{\Vect{\field}}}
    \ar[d] \\
    {\Cospan{\Vect{\field}}} \ar[r] & {\SV{\field}}. \pullbackcorner
    }
  \end{aligned}
\end{equation}
In linear algebraic terms, the two factorisation properties expressed by
\eqref{intro:linearcube} correspond to the representation of a subspace in terms
of a basis (span) and the solution set of a system of linear equations (cospan).
Most importantly, this picture provides a roadmap towards a complete
axiomatisation for $\SV{\field}$: one starts from the domain $\Vect{\field}$ of
linear maps, which is axiomatised by the equations of Hopf algebras, then combines
it with its opposite $\op{\Vect{\field}}$ via two distributive laws of
props~\cite{Lack2004a},  one yielding an axiomatisation for
$\Span{\Vect{\field}}$ and the other one for $\Cospan{\Vect{\field}}$. Finally,
merging these two axiomatisations yields a complete axiomatisation for $\SV{\field}$,
called the theory of interacting Hopf algebras~\cite{interactinghopf,ZanasiThesis}.

\medskip

It was soon realised that this modular construction was of independent interest, and perhaps evidence of a more general phenomenon. In~\cite{Zanasi16} it is shown that a similar construction could be used to characterise the prop $\ER$ of equivalence relations, using as ingredients $\Inj$, the prop of injections, and $\F$, the prop of total functions. The same result is possible by replacing equivalence relations with partial equivalence relations and functions with partial functions, forming a prop $\PF$. In both cases, the universal construction yields a privileged route to a complete axiomatisation, of $\ER$ and of $\PER$ respectively~\cite{Zanasi16}. 
\begin{equation}\label{intro:cartesiancubes}
\begin{aligned}
    \xymatrix@R=18pt{
    {\Inj + \op{\Inj}} \ar[r]
    \ar[d]& {\Span{\Inj}}
    \ar[d] \\
    {\Cospan{\F}} \ar[r] & {\ER} \pullbackcorner
    }
    \qquad \qquad
    \xymatrix@R=18pt{
    {\Inj + \op{\Inj}} \ar[r]
    \ar[d]& {\Span{\Inj}}
    \ar[d] \\
    {\Cospan{\PF}} \ar[r] & {\PER} \pullbackcorner.
    }
  \end{aligned}
\end{equation}
Even though a pattern emerges, it is certainly non-trivial: for instance, if one naively mimics the linear case~\eqref{intro:linearcube} in the attempt of characterising the prop of relations, the construction collapses to the terminal prop $\Triv$.
\begin{equation}\label{intro:cartesiancubecollapses}
\begin{aligned}
    \xymatrix@R=18pt{
    {\F + \op{\F}} \ar[r]
    \ar[d] & {\Span{\F}} \ar[d]^{} \\
    {\Cospan{\F}} \ar[r]_-{} & {\Triv}\pullbackcorner
    }
  \end{aligned}
\end{equation}

More or less at the same time, diagrammatic languages for various families of
circuits, including linear time-invariant dynamical systems \cite{Fong2015},
were analysed using so-called \emph{corelations}, which are generalised
equivalence relations~\cite{Fon16,CF,Fon17,Fong2013}. Even though they were not
originally thought of as arising from a universal construction like the examples above, corelations still follow a modular recipe, as they are expressible as a quotient of $\Cospan{\catC}$, for some prop $\catC$. Thus by analogy we can think of them as yielding one half of the diagram
\begin{equation}\label{intro:corelations}
\begin{aligned}
    \xymatrix@R=18pt{
    {\catC + \op{\catC}}
    \ar[d] &  \\
    {\Cospan{\catC}} \ar[r]_-{} & {\Corel{\catC}}.
    }
  \end{aligned}
\end{equation}
In this paper we clarify the situation by giving a unifying perspective for all these constructions. We prove a general result, which
\begin{itemize}
\item implies \eqref{intro:linearcube} and \eqref{intro:cartesiancubes} as special cases;
\item explains the failure of \eqref{intro:cartesiancubecollapses};
\item extends \eqref{intro:corelations} to a pushout recipe for corelations.
\end{itemize}
More precisely, our theorem individuates sufficient conditions for characterising the category $\Rel{\catC}$ of $\catC$-relations as a pushout. A dual construction yields the category $\Corel{\catC}$ of $\catC$-corelations as a pushout. For the case of interest when $\catC$ is a prop, the two constructions look as follows.
\[
\begin{aligned}
    \xymatrix@R=18pt{
      {\subc + \op{\subc}} \ar[r] \ar[d] & {\Spanc{\catC}}
      \ar[d] \\
    {\Cospan{\subc}} \ar[r] & {\Rel{\catC}} \pullbackcorner.
    }
 \qquad \qquad 
    \xymatrix@R=18pt{
      {\subc + \op{\subc}} \ar[r] \ar[d] & {\Spanc{\subc}}
      \ar[d] \\
    {\Cospan{\catC}} \ar[r] & {\Corel{\catC}} \pullbackcorner.
    }
  \end{aligned}
\]
The variant ingredient $\subc$ is a subcategory of $\catC$. In order to make the constructions possible, $\subc$ has to satisfy certain requirements in relation with the factorisation system $(\Cepi, \Cmono)$ on $\catC$ which defines $\catC$-relations (as jointly-in-$\Cmono$ spans) and $\catC$-corelations (as jointly-in-$\Cepi$ cospans). For instance, taking $\subc$ to be $\catC$ itself succeeds in \eqref{intro:linearcube} (and in fact, for any abelian $\catC$), but fails in \eqref{intro:cartesiancubecollapses}. 

Besides explaining existing constructions, our result opens the lead for new applications. In particular, we observe that under mild conditions the construction of relations lifts to the category $\catC^T$ of $T$-algebras for a monad $T \colon \catC \to \catC$, and dually for corelations and comonads. We leave the exploration of this and other ramifications for future work.

\subsection*{Synopsis} 
Section~\ref{sec:corelations} introduces factorisation systems and
(co)relations, and shows the subtleties of mapping spans into corelations in a
functorial way.  Section~\ref{sec:theorem} states our main result and some of
its consequences.  We first formulate the construction for categories
(Theorem~\ref{thm:corelations}), and then for props
(Theorem~\ref{thm:corelationsPROPs}), which are our prime object of interest in
applications. Section~\ref{sec:examples} is devoted to show various instances of
our construction. We illustrate the case of equivalence relations, of partial
equivalence relations, of subspaces, of linear corelations, and finally of
(co)relations of (co)algebras over a (co)monad. Section~\ref{sec:conclusion}
summarises our contribution and looks forward to further work. Appendix
\ref{sec:proof} unfolds the full proof of the main result,
Theorem~\ref{thm:corelations}, and Appendix \ref{app:spansPROP} does the same
for the extension to monoidal categories. 

This work is based on the conference paper \cite{FZ-calco17}, which has been
expanded to include all the omitted proofs, the monoidal case of the main
construction (Section \ref{sec:monoidal}), and new explanations and examples,
including two additional case studies: the non-example of plain relations
(Section \ref{ex:plainrelations}) and the example of corelations for coalgebras
(Section \ref{ssec.comonadcoalg}). Also, the generic construction of relations
for algebras (Section~\ref{prop.algebras}) has been amended and extended.

\subsection*{Conventions} 
We write $f \poi g$ for composition of $f \colon X \to Y$ and $g \colon Y \to Z$ in a category $\catC$. It will be sometimes convenient to indicate an arrow $f \colon X \to Y$ of $\catC$ as $X\tr{f \in \catC}Y$ or also $\tr{\in \catC}$, if names are immaterial. Also, we write $X \tl{f \in \catC} Y$ for an arrow $X \tr{f \in \op{\catC}} Y$. We use $\tns$ for the monoidal product in a monoidal category, with unit object $I$. Monoidal categories and functors will be strict when not stated otherwise.

\section{(Co)relations} \label{sec:corelations}
In this section we review the categorical approach to relations, which
originates in the work of Hilton \cite{Hil66} and Klein \cite{Kle70}. This
categorical approach is based on the observation that, in $\SET$, relations are
the jointly mono spans---that is, spans $X \tl{f}N \tr{g} Y$ such that the
pairing $\langle f,g \rangle\colon N \to X \times Y$ is an injection. We
introduce in parallel the dual notion, called corelations \cite{Fon16}. These
are generalisations of jointly epi cospans---that is, cospans $X \tr{f} N \tl{g}
Y$ such that $[f,g]\colon X+Y \to N$ is surjective---and can be seen as an
abstraction of the concept of equivalence relation.

\begin{defn}
  A {\bf factorisation system} $(\Cepi,\Cmono)$ in a category
  $\mathcal C$ comprises subcategories $\Cepi$, $\Cmono$ of $\mathcal
  C$ such that
  \begin{enumerate}
    \item $\Cepi$ and $\Cmono$ contain all isomorphisms of $\mathcal
      C$.
    \item  every morphism $f \in \mathcal C$ admits a factorisation $f=e;m$, $e \in \Cepi$, $m \in \Cmono$.
\item given $f,f'$, with factorisations $f = e;m$, $f' = e';m'$ of the above
  sort, for every $u$, $v$ such that $f;v = u;f'$
  there exists a unique $s$ making the following diagram commute.
    \[
    \xymatrixcolsep{2pc}
    \xymatrixrowsep{2pc}
    \xymatrix{
      \ar[r]^e \ar[d]_u & \ar[r]^m \ar@{-->}[d]^{\exists! s} &  \ar[d]^v \\
       \ar[r]_{e'}& \ar[r]_{m'} &
    }
  \]
  \end{enumerate}
\end{defn}

\begin{defn}
  Given a category $\catC$, we say that a subcategory $\subc$ is {\bf stable
  under pushout} if for every pushout square 
  \[
    \xymatrix{
      \ar[r]^a \ar[d] & \ar[d]\\
      \ar[r]^f & {}
    }
  \]
  such that $a \in \subc$, we also have that $f \in \subc$. Similarly, we say
  that $\subc$ is {\bf stable under pullback} if for every pullback square
  labelled as above $f \in \subc$ implies $a \in \subc$.

  A factorisation system $(\Cepi,\Cmono)$ is {\bf stable} if $\Cepi$ is stable
  under pullback, {\bf costable} if $\Cmono$ is stable under pushout, and {\bf
  bistable} if it is both stable and costable.
\end{defn}

Examples of bistable factorisation systems include the trivial factorisation
systems $(\mathcal I_{\catC},\catC)$ and $(\catC,\mathcal I_{\catC})$ in any category
$\catC$, where $\mathcal I_{\catC}$ is the subcategory containing exactly the
isomorphisms in $\catC$, the epi-mono factorisation system in any topos, or the
epi-mono factorisation system in any abelian category. Stable factorisation
systems include the (regular epi, mono) factorisation system in any regular
category, such as any category monadic over $\SET$. Dually, costable
factorisation systems include the (epi, regular mono) factorisation system in any
coregular category, such as the category of topological spaces and continuous
maps.

\begin{defn}
Given a category $\catC$ with pushouts, the category $\Cospan{\catC}$ has the same objects as $\catC$ and arrows $X \to Y$ isomorphism classes of cospans $X \tr{f}  \tl{g} Y$ in $\catC$. The composite of $X \tr{f}  \tl{g} Y$ and $Y \tr{h}  \tl{i} Z$ is obtained by taking the pushout of $\tl{g} \tr{h}$.

Given a category $\catC$ with pullbacks, the category $\Span{\catC}$ has the same objects as $\catC$ and arrows $X \to Y$ isomorphism classes of spans $X \tl{f}  \tr{g} Y$ in $\catC$. The composite of $X \tl{f}  \tr{g} Y$ and $Y \tl{h}  \tr{i} Z$ is obtained by taking the pullback of $\tr{g}\tl{h}$.
\end{defn}

When $\catC$ also has a (co)stable factorisation system, we may define a category
of (co)relations with respect to this system.

\begin{defn}~\label{def:corel}
    Given a category $\catC$ with pushouts and a costable factorisation system
   $(\Cepi,\Cmono)$, the category $\Corel{\catC}$ has the same objects as $\catC$.
   The arrows $X \to Y$ are equivalence classes of cospans $X \tr{f} N \tl{g} Y$ under
   the symmetric, transitive closure of the following relation: two cospans $X \tr{f} N
   \tl{g} Y$ and $X \tr{f'} N' \tl{g'} Y$ are related if there exists $N
   \tr{m}
   N'$ in $\Cmono$ such that
  \begin{equation}\label{eq:defcorelations}
    \begin{aligned}
    \xymatrix@R=5pt{
      & N   \ar[dd]^m  \\
      X \ar[ur]^{f} \ar[dr]_{f'}&& Y \ar[ul]_{g}\ar[dl]^{g'}\\
      & N'  
    }
  \end{aligned}
  \end{equation}
  commutes. This notion of equivalence respects composition of cospans, and so
  $\Corel{\catC}$ is indeed a category. We call the morphisms in this category
  {\bf corelations}. 
  
   Given a category $\catC$ with pullbacks and a stable factorisation
   system, we can dualise the above to define the category
   $\Rel{\catC}$ of {\bf relations}. 
  \end{defn}

(Co)stability is needed in order to ensure that composition of (co)relations is
associative, \emph{cf.}~\cite[\S 3.3]{Fon16}. For proofs it is convenient to
give an alternative description of (co)relations.

\begin{prop} \label{lemma:charCospanToCorel}
If $\catC$ has binary coproducts, corelations are in one-to-one
  correspondence with isomorphism classes of cospans such that the
  copairing $\copair{p,q} \maps X + Y \to N$ lies in $\Cepi$. 
  
If $\catC$ has binary products, relations are in one-to-one
  correspondence with isomorphism classes of spans such that the
  pairing $\pair{f,g} \maps N \to X \times Y$ lies in $\Cmono$. 
\end{prop}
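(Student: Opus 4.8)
The plan is to prove the statement for corelations in full and deduce the one for relations by duality --- passing to $\op{\catC}$ turns pushouts into pullbacks, binary coproducts into binary products, the factorisation system $(\Cepi,\Cmono)$ into $(\op{\Cmono},\op{\Cepi})$, the copairing into the pairing, and corelations into relations. Throughout I would work with cospans only up to isomorphism; since isomorphisms lie in $\Cmono$ by axiom~(1), isomorphic cospans determine the same corelation, so the assignment "isomorphism class of cospan $\mapsto$ corelation" is well defined, and it suffices to show that its restriction to those cospans $X \tr{f} N \tl{g} Y$ with $\copair{f,g} \in \Cepi$ is a bijection onto the set of corelations $X \to Y$.

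For \emph{surjectivity}: given an arbitrary cospan $X \tr{f} N \tl{g} Y$, factor its copairing as $\copair{f,g} = e \poi m$ with $e \in \Cepi$, $m \in \Cmono$, through an object $M$. Then $X \tr{\iota_X \poi e} M \tl{\iota_Y \poi e} Y$ has copairing $\copair{\iota_X \poi e, \iota_Y \poi e} = \copair{\iota_X,\iota_Y} \poi e = e \in \Cepi$, and $m \in \Cmono$ witnesses, via diagram \eqref{eq:defcorelations}, that this cospan is related to the original one, since $\iota_X \poi e \poi m = \iota_X \poi \copair{f,g} = f$ and likewise $\iota_Y \poi e \poi m = g$. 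Hence every corelation is represented by a cospan with copairing in $\Cepi$.

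For \emph{injectivity}, given any cospan $c$ I would write $\hat c$ for the $\Cepi$-copairing cospan constructed above; by uniqueness of $(\Cepi,\Cmono)$-factorisations --- which is the diagonal fill-in axiom~(3) applied with identities on the outer square --- $\hat c$ is well defined up to isomorphism of cospans. Two facts then suffice. First, if $c$ already has copairing $\copair{f,g} \in \Cepi$, then $\copair{f,g} = \copair{f,g} \poi \id$ is an $(\Cepi,\Cmono)$-factorisation, so $\hat c \cong c$. Second, if $c = (X \tr{f} N \tl{g} Y)$ and $c' = (X \tr{f'} N' \tl{g'} Y)$ are related by a single instance of \eqref{eq:defcorelations} via $m \colon N \to N'$ in $\Cmono$, then $\copair{f',g'} = \copair{f \poi m, g \poi m} = \copair{f,g} \poi m$, so factoring $\copair{f,g} = e \poi m_0$ through $M$ exhibits $\copair{f',g'} = e \poi (m_0 \poi m)$ as an $(\Cepi,\Cmono)$-factorisation --- using that $\Cmono$ is closed under composition --- with the same $\Cepi$-part $e$ and the same middle object; uniqueness of factorisations yields an isomorphism $M \cong M'$ commuting with the $\Cepi$-parts, i.e.\ an isomorphism $\hat c \cong \hat{c'}$. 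Since corelations are the classes of the equivalence generated by \eqref{eq:defcorelations}, an induction over the connecting zig-zag upgrades the second fact to $\hat c \cong \hat{c'}$ whenever $c,c'$ represent the same corelation; combined with the first fact, two $\Cepi$-copairing cospans representing the same corelation are isomorphic.

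The only genuinely delicate point is this last reduction: the zig-zag witnessing that two $\Cepi$-copairing cospans are equivalent may pass through cospans whose copairing is \emph{not} in $\Cepi$, so one cannot chain the elementary comparisons directly on these representatives --- routing everything through the canonical form $\hat c$, which is defined for every cospan and is invariant under each elementary step, is what makes the argument go through. Everything else is bookkeeping with the universal property of coproducts and the factorisation axioms. Note that costability is not used here; it is only needed to make composition of corelations associative, so the same proof works for any factorisation system on a category with binary coproducts, and dually for relations.
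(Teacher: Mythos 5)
Your proof is correct and follows essentially the same route as the paper's: construct the canonical jointly-in-$\Cepi$ (resp.\ jointly-in-$\Cmono$) representative by factoring the (co)pairing, observe that an elementary instance of \eqref{eq:defcorelations} composes the (co)pairing with a morphism of $\Cmono$ and hence leaves the $\Cepi$-part unchanged up to unique isomorphism, and induct over the zig-zag. The paper phrases this as ``two (co)spans are equivalent iff the $\Cmono$-parts (resp.\ $\Cepi$-parts) of their (co)pairings agree'', which is exactly your invariance of $\hat c$; your explicit remarks on routing the zig-zag through the canonical form and on costability being unnecessary here are consistent with the paper's treatment.
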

\begin{proof}
We focus on relations, the proof for corelations being dual. It suffices to show
that two spans $\tl{f} \tr{g}$ and $\tl{f'}\tr{g'}$ represent the same relation
if and only if the $\Cmono$ parts of the factorisations of $\langle f,g\rangle$
and $\langle f',g' \rangle$ are equal.

For the backward direction, write factorisations $\langle f,g\rangle = e;m$ and
$\langle f',g'\rangle = e';m$, and note that $m = \langle m;p_1,m;p_2\rangle$,
where the $p_i$ are the canonical projections. Thus the following diagrams commute
  \[
    \xymatrix@R=5pt{
      &&  \ar[dll]_{f} \ar[ddd]^e \ar[drr]^{g} &&\\
      &&&& \\
      &  \ar[ul]^{p_1}&& \ar[ur]_{p_2} &\\
      &&   \ar[ul]^{m}  \ar[ur]_{m}&&
    } \qquad \qquad
        \xymatrix@R=5pt{
      &&  \ar[dll]_{f'} \ar[ddd]^{e'} \ar[drr]^{g'} &&\\
      &&&& \\
      &  \ar[ul]^{p_1}&&  \ar[ur]_{p_2} &\\
      &&   \ar[ul]^{m}  \ar[ur]_{m}&&
    }
  \]
Therefore $\tr{e \in \Cepi}$ and $\tr{e' \in \Cepi}$ witness that both $\tl{f} \tr{g}$ and
$\tl{f'}\tr{g'}$ are in the equivalence class of
$\tl{p_1}\tl{m}\tr{m}\tr{p_2}$ and so represent the same relation.

For the forward direction, note that if $\tl{f} \tr{g}$ and $\tl{f'}\tr{g'}$
represent the same relation, then there exists a sequence of spans $\tl{f_i}
\tr{g_i}$ in $\catC$ together with morphisms $\tr{e_i \in \Cepi}$,
$i=0, \dots n$, such that $f_1=f$, $g_1=g$, $f_n =f'$, $g_n= g'$, and for all $i
= 1,\dots ,n$ either (i) $e_i;f_i = f_{i-1}$ and $e_i;g_i=g_{i-1}$, or (ii) $f_i
=e_i;f_{i-1}$ and $g_i=e_i;g_{i-1}$. This implies either (i) $e_i;\langle f_i,g_i\rangle
=\langle f_{i-1},g_{i-1}\rangle$ or (ii) $\langle f_i,g_i\rangle
=e_i;\langle f_{i-1},g_{i-1}\rangle$. In either case, by the uniqueness of
factorisations, we see that the $\Cmono$ parts of $\langle f_i,g_i \rangle$ are
the same for all $i$.
\end{proof}

We call a span $\tl{f}\tr{g}$ {\bf jointly-in-$\Cmono$} if the pairing $\langle
f,g\rangle$ lies in $\Cmono$, and analogously call a cospan $\tr{f}\tl{g}$ {\bf
jointly-in-$\Cepi$} if the copairing $[f,g]$ lies in $\Cepi$. To each relation
there is thus, up to isomorphism, a canonical representation as a
jointly-in-$\Cmono$ span, and similarly to each corelation a jointly-in-$\Cepi$
cospan.

\begin{example} \label{ex.corels}
  Many examples of relations and corelations are already familiar.
  \begin{itemize}
    \item The category $\SET$ is bicomplete and has a bistable epi-mono
      factorisation system. Relations with respect to this factorisation system
      are simply the usual binary relations, while corelations from $X \to Y$ in
      $\SET$ are surjective functions $X+Y \to N$; thus their isomorphism
      classes---the arrows of $\Corel{\SET}$---are partitions, or equivalence relations on $X+Y$. 
    \item The category of vector spaces over a field $\field$ is abelian, and hence bicomplete with a
      bistable epi-mono factorisation system. The categories of
      relations and corelations are isomorphic: a morphism $X \to Y$ in these
      categories can be thought of as a linear relations, i.e. a subspace of $X \times Y$.  
       \item In any category $\catC$ the trivial morphism-isomorphism factorisation
      system $(\catC,\mathcal I_{\catC})$ is bistable. Relations with respect to $(\catC,\mathcal I_{\catC})$ are equivalence classes of isomorphisms $N
      \stackrel\sim\to X \times Y$, and hence there is a unique relation between
      any two objects. Corelations are just cospans.
    \item Dually, relations with respect to the isomorphism-morphism
      factorisation $(\mathcal I_{\catC}, \catC)$ are just spans, and there is
      a unique corelation between any two objects.  
        \end{itemize}
\end{example}

We now study the functorial interpretation of cospans and spans as
corelations. This discussion is instrumental in our universal construction for
corelations (Theorem \ref{thm:corelations}).

First, given two categories with the same collections of objects, we may speak
of {\bf identity-on-objects (ioo)} functors between them, i.e. functors that are
the identity map on objects. Four examples of such functors will become relevant in the next section:
\begin{equation}\label{eq:functorsCToSpanCospan}
\begin{aligned}
\catC \to \Cospan{\catC} \text{ maps } \tr{f} \text{ to } \tr{f}\tl{id} \qquad &
\qquad \phantom{\scriptstyle op}
\catC \to \Span{\catC} \text{ maps } \tr{f} \text{ to } \tl{id}\tr{f}\\
 \op{\catC} \to \Cospan{\catC} \text{ maps }  \tl{g } \text{ to } \tr{id}\tl{g}
 \qquad & \qquad
 \op{\catC} \to \Span{\catC} \text{ maps }  \tl{g } \text{ to } \tl{g}\tr{id}
 \end{aligned}
\end{equation}

We are now ready to discuss the canonical map from cospans to corelations. This
is simple: one just interprets a cospan representative as its corelation
equivalence class.

\begin{defn}~\label{def:cospantocorel}
Let $\catC$ be a category equipped with a costable factorisation system
  $(\Cepi,\Cmono)$. We define $\CospanToCorel \colon \Cospan{\catC} \to
  \Corel{\catC}$ as the ioo functor mapping the isomorphism class of cospans
  represented by $X\tr{f}N\tl{g}Y$ to the corelation represented by this cospan.
\end{defn}

It is straightforward to check that this is well-defined. Moreover, 
\begin{prop}\label{prop:CospanToCorelFull} $\CospanToCorel \colon \Cospan{\catC} \to \Corel{\catC}$ is full. \end{prop}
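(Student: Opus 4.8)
The plan is to unwind the definitions: fullness of $\CospanToCorel$ means exactly that for each pair of objects $X,Y$ the hom-set map $\Cospan{\catC}(X,Y) \to \Corel{\catC}(X,Y)$ is surjective, and this is essentially immediate from the way $\Corel{\catC}$ is built on top of $\Cospan{\catC}$.

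First I would fix objects $X$ and $Y$ and take an arbitrary corelation $r \colon X \to Y$. By Definition~\ref{def:corel}, $r$ is \emph{by construction} an equivalence class of cospans $X \tr{f} N \tl{g} Y$ under the symmetric, transitive closure of the relation~\eqref{eq:defcorelations}; in particular this class is nonempty, so I may choose a representing cospan $X \tr{f} N \tl{g} Y$. The isomorphism class of this cospan is an arrow $X \to Y$ of $\Cospan{\catC}$, and by Definition~\ref{def:cospantocorel} the functor $\CospanToCorel$ sends it to the corelation represented by $X \tr{f} N \tl{g} Y$, which is exactly $r$. Hence every corelation lies in the image of the hom-set map, which is precisely fullness; the dual argument yields the corresponding statement for $\Span{\catC} \to \Rel{\catC}$, should one want it.

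There is no genuine obstacle here. The only point requiring a moment's care is the distinction between an \emph{arrow} of $\Cospan{\catC}$ (an isomorphism class of cospans) and a \emph{corelation} (a coarser equivalence class of cospans): one must note, as already observed just after Definition~\ref{def:cospantocorel}, that $\CospanToCorel$ is well defined because isomorphic cospans are related by~\eqref{eq:defcorelations} (as $\Cmono$ contains all isomorphisms) and so represent the same corelation. Viewed this way, $\CospanToCorel$ is simply the evident quotient of hom-sets, and a quotient map is always surjective.
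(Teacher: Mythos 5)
Your proof is correct and is essentially the paper's own argument: pick a representative cospan of a given corelation and observe that $\CospanToCorel$ sends its isomorphism class back to that corelation. The extra remark about well-definedness (isomorphic cospans representing the same corelation) is a sensible precaution but adds nothing beyond what the paper already notes after Definition~\ref{def:cospantocorel}.
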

\begin{proof}
Let $a$ be a corelation. Then choosing some representative $X \to N \leftarrow
Y$ of $a$ gives a cospan whose $\CospanToCorel$-image is $a$.
\end{proof}

Mapping spans to corelations is more subtle. Given a span, we may obtain a
cospan by taking its pushout. When $\catC$ has pushouts and pullbacks, this
defines a function on morphisms $\Span{\catC} \to \Cospan{\catC}$. This function
is rarely, however, a functor: it may fail to preserve composition. To turn it
into a functor, two tweaks are needed: first, we restrict to a subcategory
$\Span\subc$ of $\Span{\catC}$, for some carefully chosen subcategory $\subc
\subseteq \catC$, and second, we take the jointly-in-$\Cepi$ part of the
pushout. We call the resulting functor $\SpanToCorel$, as it takes the
\emph{pushout} and then \emph{projects}.

How do we choose $\subc$? Given a cospan $X \to A \leftarrow Y$, we may take its
pullback to obtain a span $X \leftarrow P \to Y$, and then pushout this span in
$\catC$ to obtain a cospan $X \to Q \leftarrow Y$. This gives a diagram
\[
  \xymatrix@R=10pt@C=40pt{
    & X \ar[dr] \ar@/^1em/[drr] \\
    P \ar[ur] \ar[dr] && Q \ar@{-->}[r]^f & A \\
    & Y \ar[ur] \ar@/_1em/[urr]
  }
\]
where the map $f$ exists and is unique by the universal property of the pushout.
We want this map $f$ to lie in $\Cmono$: by Definition~\ref{def:corel}, this implies that $X \to A \leftarrow
Y$ and $X \to Q \leftarrow Y$ represent the same corelation. This condition is
reminiscent of that introduced by Meisen in her work on so-called categories of
pullback spans \cite{Mei74}.

Note that this pullback and pushout take place in $\catC$. We nonetheless
ask $\subc$ to be closed under pullback, so spans $\tl{f \in \subc}\tr{g\in \subc}$ do indeed form a subcategory $\Span\subc$ of $\Span\catC$.\footnote{Calling this subcategory $\Span\subc$ is a slight abuse of notation: it may be
the case that $\subc$ itself has pullbacks, and we have not proved that these
agree with pullbacks in $\catC$. Nonetheless, this conflict does not cause
trouble in any of our examples below, and we stick to this convention for
notational simplicity.}

\begin{prop}~\label{prop:spantocorel}
  Let $\catC$ be a category equipped with a costable factorisation system
  $(\Cepi,\Cmono)$.  Let $\subc$ be a subcategory of $\catC$ containing all
  isomorphisms and stable under pullback.  Further suppose that the canonical
  map given by the pushout of the pullback of a cospan in $\subc$ lies in
  $\Cmono$. Then mapping a span in $\subc$ to the jointly-in-$\Cepi$ part of its
  pushout cospan defines an ioo functor $\Pi\maps \Span{\subc} \to \Corel{\catC}$.
\end{prop}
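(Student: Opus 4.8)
The plan is to show three things: (1) the assignment $\SpanToCorel$ is well-defined on isomorphism classes of spans in $\subc$; (2) it preserves identities; and (3) it preserves composition. Throughout I would use Proposition~\ref{lemma:charCospanToCorel} to identify a corelation with its canonical jointly-in-$\Cepi$ cospan, so that $\SpanToCorel$ sends a span $X \tl{f} N \tr{g} Y$ in $\subc$ to the corelation whose canonical representative is the jointly-in-$\Cepi$ part $e$ of the factorisation $[p,q] = e\poi m$ of the copairing of its pushout cospan $X \tr{p} P \tl{q} Y$.

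First I would treat well-definedness and identities, which are routine. If two spans in $\subc$ are isomorphic, their pushouts are isomorphic cospans, hence by uniqueness of factorisations (the third axiom of a factorisation system) their jointly-in-$\Cepi$ parts agree up to iso, so they name the same corelation. For identities, the identity span $X \tl{\id} X \tr{\id} X$ has pushout $X \tr{\id} X \tl{\id} X$, whose copairing $[\id,\id]\maps X + X \to X$ factors through the codiagonal; its $\Cepi$-part is the identity corelation on $X$ (this is the same computation that shows $\CospanToCorel$ is identity-on-objects and sends the identity cospan to the identity corelation).

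The substance is preservation of composition. Given composable spans $X \tl{f} N \tr{g} Y$ and $Y \tl{h} M \tr{k} Z$ in $\subc$, their composite in $\Span{\subc}$ is formed by the pullback $N \tl{f'} P \tr{k'} M$ of $g$ and $h$ (which lies in $\subc$ because $\subc$ is stable under pullback and contains all isos), giving $X \tl{f'\poi f} P \tr{k'\poi k} Z$. I must compare the jointly-in-$\Cepi$ part of the pushout of this composite span with the composite in $\Corel{\catC}$ of the jointly-in-$\Cepi$ parts of the pushouts of the two original spans. The strategy: take pushouts of the two spans to get cospans $X \tr{p} A \tl{q} Y$ and $Y \tr{r} B \tl{s} Z$; their composite as cospans is the pushout over $Y$, giving $X \to C \tl{} Z$. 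Now $C$ is built from $N$, $M$ and $Y$, while the pushout $D$ of the composite span is built from $N$, $M$ and the pullback $P$ of $g,h$. The comparison map $D \to C$ arises precisely as ``the pushout of the pullback'' situation of the hypothesis — more carefully, one exhibits $D$ as the pushout of a pullback of a suitable cospan in $\subc$, so the hypothesis of the proposition forces the canonical map $D \to C$ to lie in $\Cmono$. By Definition~\ref{def:corel} this means $X \to D \tl{} Z$ and $X \to C \tl{} Z$ represent the same corelation. Finally, since $\CospanToCorel$ is a functor (Definition~\ref{def:cospantocorel}) and composing cospans then applying $\CospanToCorel$, versus applying $\CospanToCorel$ then composing corelations, agree, and since passing to the jointly-in-$\Cepi$ part is just choosing the canonical representative of a corelation (Proposition~\ref{lemma:charCospanToCorel}), all these identifications chain together to give $\SpanToCorel(\sigma\poi\tau) = \SpanToCorel(\sigma)\poi\SpanToCorel(\tau)$.

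The main obstacle I anticipate is the bookkeeping in the composition step: one has to set up the right combination of pullback and pushout squares so that the ``comparison map'' really is an instance of the map in the hypothesis (pushout of the pullback of a cospan in $\subc$ lying in $\Cmono$), and then invoke the pasting lemmas for pullbacks and pushouts to see that the two iterated colimits ($D$ and $C$) fit into the required diagram. A secondary subtlety is making sure everything stays inside $\subc$ where needed — but this is exactly guaranteed by stability under pullback and closure under isomorphisms. I expect the cleanest route is to draw one large commutative diagram containing $N$, $M$, $Y$, the pullback $P$, the pushouts $A$, $B$, the cospan-composite $C$ and the span-composite-then-pushout $D$, and read off the comparison $D \to C$ together with its membership in $\Cmono$ from the hypothesis applied to an appropriate cospan.
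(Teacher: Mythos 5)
Your strategy---verify well-definedness and identities, then preservation of arbitrary composites---is workable, but the crucial step is not quite right as stated, and the paper takes a shorter route. The problem is your claim that the comparison map $D \to C$ ``arises precisely as the pushout-of-the-pullback situation of the hypothesis''. It does not: $C$ is not the apex of a cospan in $\subc$ whose pullback is $P$ (its legs $X \to C$ and $Z\to C$ need not even lie in $\subc$), so the hypothesis cannot be applied to $D \to C$ directly. The correct way to close the argument is: apply the hypothesis to the cospan $N \tr{g} Y \tl{h} M$, which \emph{does} lie in $\subc$; its pullback is $P$, and the pushout $Q = N +_P M$ of that pullback carries a canonical map $Q \to Y$ in $\Cmono$. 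Computing the colimit of the zigzag $X \tl{f} N \to W \tl{} M \tr{k} Z$ once with $W=Q$ (which gives $D$, since $N$ and $M$ are redundant in that diagram) and once with $W=Y$ (which gives $C$), the pasting lemma for pushouts yields $C \cong D +_Q Y$, i.e.\ $D \to C$ is a \emph{cobase change} of $Q \to Y$. You then need that $\Cmono$ is stable under pushout---precisely the costability of the factorisation system---to conclude $D \to C \in \Cmono$. Your sketch never invokes costability in this role, and without it the step does not go through.

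The paper avoids this extra layer by reducing to generators: every span in $\subc$ factors as $\tl{f}\tr{\id}\poi\tl{\id}\tr{g}$, so it suffices to check preservation of composition on the four composites of generators, and the only nontrivial case is $\tl{\id}\tr{f}\poi\tl{g}\tr{\id}$. There one side is represented by the cospan $\tr{f}\tl{g}$ and the other by the pushout of its pullback, so the hypothesis applies verbatim, with no cobase change and no appeal to costability at this point. Either route is fine, but you must either supply the pasting-plus-costability argument above or reduce to generators as the paper does. A minor further remark: passing to the jointly-in-$\Cepi$ canonical representative is an unnecessary detour---by Definition~\ref{def:corel} it suffices to exhibit a single $\Cmono$-morphism between the two apexes commuting with the legs, which is exactly what the argument produces.
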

\begin{proof}
  Recall that $\Span{\subc}$ is generated by morphisms of the form
  $\tl{id}\tr{f \in \subc}$ and $\tl{f \in \subc}\tr{id}$. It is thus enough to
  show $\SpanToCorel$ preserves composition on arrows of these two types. There
  exist four cases: (i) $\tl{id}\tr{f}\tl{id}\tr{g}$, (ii)
  $\tl{f}\tr{id}\tl{g}\tr{id}$, (iii) $\tl{f}\tr{id}\tl{id}\tr{g}$, and (iv)
  $\tl{id}\tr{f}\tl{g}\tr{id}$. The first three cases are straightforward to
  prove, and in fact hold when mapping $\Span{\catC} \to \Cospan{\catC}$. It is
  the case (iv) that needs our restriction to $\Span\subc$. There $\SpanToCorel(\tl{id}\tr{f})\SpanToCorel(\tl{g}\tr{id})$
  is represented by the cospan $\tr{f}\tl{g}$, while
  $\SpanToCorel(\tl{id}\tr{f}\tl{g}\tr{id})$ is the represented by the
  pushout $\tr{p}\tl{q}$ of the pullback of $\tr{f}\tl{g}$. But by hypothesis,
  there exists a unique $\tr{m \in \Cmono}$ making the following diagram commute.
  \[
    \xymatrix@C=30pt@R=10pt{
      &  \ar[dr]^p \ar@/^1em/[drr]^f \\
      &&  \ar[r]^m &  \\
      & \ar[ur]_q \ar@/_1em/[urr]_g
    }
  \]
This implies that $\tr{p}\tl{q}$ and $\tr{f}\tl{g}$ represent the
  same corelation, and so $\SpanToCorel$ is functorial.
\end{proof}

For example, if the category $\Cmono$ has pullbacks and these coincide with pullbacks in
$\catC$, then we can take $\subc=\Cmono$. If $\catC$ is abelian, we can take $\subc=\catC$.

\section{Main theorem: a universal property for (co)relations}\label{sec:theorem}

This section states our main result and some consequences.

\subsection{The main theorem}
We first fix our ingredients. 
\begin{ass}\label{ass:thcorelations}
  Let $\catC$ be a category with
  \begin{itemize}
    \item pushouts and pullbacks;
  \item a costable factorisation system $\fact{\Cepi}{\Cmono}$ with $\Cmono$ a subcategory of the
  monos in $\catC$;
  \item a subcategory $\subc$ of $\catC$ containing $\Cmono$, stable under pullback, and
  such that the canonical map given by the pushout of the pullback of a
  cospan in $\subc$ lies in $\Cmono$.
  \end{itemize}
\end{ass}

Building on the results of Section~\ref{sec:corelations}, the second requirement
above allows us to form a category $\Corel{\catC}$ of corelations, whereas the
third yields a functor $\SpanToCorel \colon \Span{\subc} \to \Corel{\catC}$.  

We shall also use the functor $\CospanToCorel \colon \Cospan{\catC} \to
\Corel{\catC}$ (Definition~\ref{def:cospantocorel}), and a category
$\subc\+{\subc}\op{\subc}$. The objects of this latter category are those of
$\subc$ and the morphisms $X \to Y$ are so-called `zigzags' $X\tr{f}\tl{g}\tr{h}
\dots \tl{k} Y$ in $\subc$, where $f$, $g$, $\dots$, $k$ are non-identity
morphisms in $\subc$.  Composition in $\subc\+{\subc}\op{\subc}$ is given by
concatenation of zigzags, where we use composition in $\subc$ to compose any two
consecutive morphisms pointing in the same direction, and remove any resulting
identity morphisms in $\subc$. The identity morphism in
$\subc\+{\subc}\op{\subc}$ is the empty zigzag consisting of no morphisms. Note
that there are ioo functors from $\subc\+{\subc}\op{\subc}$ to $\Cospan{\catC}$
and to $\Span{\catC}$, defined on morphisms by taking colimits and by taking
limits, respectively, of zigzags---equivalently, they are defined by pointwise
application of the functors in~\eqref{eq:functorsCToSpanCospan}.\footnote{More
abstractly, one can see $\subc\+{\subc}\op{\subc}$ as the pushout of $\subc$ and
$\op{\subc}$ over the respective inclusions of $\lvert\subc\rvert$, the discrete
category on the objects of $\subc$. The functors $\Span{\subc} \tl{}
\subc\+{\subc}\op{\subc} \tr{} \Cospan{\catC}$ are then those given by the
universal property with respect to (suitable restrictions of) the functors in
\eqref{eq:functorsCToSpanCospan}.} 

We make all these components interact in our main theorem.

\begin{thm}
 \label{thm:corelations}
   With $\catC$ and $\subc$ as in Assumption \ref{ass:thcorelations}, the
   following is a pushout in $\CAT$:
\begin{equation}\tag{$\star$}
\label{eq:pushoutCorel}
\begin{aligned}
    \xymatrix@C=40pt{
      {\subc \+{\subc} \op{\subc}} \ar[r] \ar[d] & {\Spanc{\subc}}
      \ar[d]^{\SpanToCorel} \\
    {\Cospan{\catC}} \ar[r]_-{\CospanToCorel} & {\Corel{\catC}} \pullbackcorner
    }
  \end{aligned}
\end{equation}
\end{thm}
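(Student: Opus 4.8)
The plan is to verify the universal property of the pushout directly. Given a category $\mathcal D$ together with functors $S\colon \Spanc{\subc}\to\mathcal D$ and $C\colon\Cospan{\catC}\to\mathcal D$ that agree on $\subc\+{\subc}\op{\subc}$, I must produce a unique functor $H\colon\Corel{\catC}\to\mathcal D$ with $H\circ\SpanToCorel = S$ and $H\circ\CospanToCorel = C$. Uniqueness is the easy half: since $\CospanToCorel$ is full (Proposition~\ref{prop:CospanToCorelFull}) and identity-on-objects, every corelation is of the form $\CospanToCorel(c)$ for some cospan $c$, so $H$ is forced to send it to $C(c)$; in particular $H$ is identity-on-objects and determined on all morphisms. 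So the real content is \emph{existence}: showing that $c\mapsto C(c)$ is well-defined on corelations (i.e.\ constant on each $\CospanToCorel$-fibre) and functorial.

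For existence, the first step is to describe the kernel of $\CospanToCorel$ concretely. By Definition~\ref{def:corel} two cospans $X\tr{f}N\tl{g}Y$ and $X\tr{f'}N'\tl{g'}Y$ represent the same corelation iff they are connected by a zigzag of $\Cmono$-maps between the apexes compatible with the legs. So I need: whenever $m\colon N\to N'$ lies in $\Cmono$ and commutes with the legs as in \eqref{eq:defcorelations}, then $C$ of the two cospans agree. The trick is to factor this comparison through $\Spanc{\subc}$: the map $m$, being in $\Cmono\subseteq\subc$, together with the identity, gives a span $N\tl{id}\tr{m}N'$ in $\subc$, and the key computation is that in $\Cospan{\catC}$ the cospan $\tr{f'}\tl{g'}$ equals the composite $(\tr{f}\tl{g})$ followed by the cospan image of the span $N\tl{id}\tr{m}N'$ --- but caution is needed here because pushing out along $m$ need not return $N'$ on the nose. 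What is cleaner is to observe that both cospans factor through $N'$ and that the equation $X\tr{f'}N'\tl{g'}Y \;=\; \bigl(X\tr{f}N\tl{id}N\bigr)\poi\bigl(N\tr{m}N'\tl{id}N'\bigr)\poi\bigl(N'\tr{id}N'\tl{g'}Y\bigr)$ holds in $\Cospan{\catC}$ after the appropriate pushout simplifications, so applying $C$ and using $C\circ\SpanToCorel|_{\ldots} = S\circ(\text{incl})$ --- rather, using that $C$ of the "backward $m$" cospan (the image under $\op{\subc}\to\Cospan{\catC}$) equals $S$ of the span $\tl{m}\tr{id}$ (the image under $\op{\subc}\to\Spanc{\subc}$), because these agree on $\subc\+{\subc}\op{\subc}$ --- lets me rewrite $C(\tr{f'}\tl{g'})$ as $C(\tr{f}\tl{g})$ composed with $S(\tl{id}\tr{m})$ composed with $S(\tl{m}\tr{id})$, and the last two cancel since $\tl{id}\tr{m}\poi\tl{m}\tr{id} = \id$ in $\Spanc{\subc}$ (their composite span is the pullback of $\tr{m}\tl{m}$, which is $N$ as $m$ is mono). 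Hence $C$ descends to $H\colon\Corel{\catC}\to\mathcal D$, and functoriality of $H$ is inherited from that of $C$ together with surjectivity-on-morphisms of $\CospanToCorel$.

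Finally I must check the two triangles. The triangle $H\circ\CospanToCorel = C$ is immediate from the definition of $H$. For $H\circ\SpanToCorel = S$, recall from the proof of Proposition~\ref{prop:spantocorel} that $\SpanToCorel$ sends a span to the jointly-in-$\Cepi$ part of its pushout cospan, and that $\Spanc{\subc}$ is generated by the two families $\tl{id}\tr{f}$ and $\tl{f}\tr{id}$ with $f\in\subc$. On generators, $\SpanToCorel(\tl{id}\tr{f})$ is $\CospanToCorel$ of the cospan $\tr{f}\tl{id}$ (no projection needed on this side), so $H\SpanToCorel(\tl{id}\tr{f}) = C(\tr{f}\tl{id})$, which equals $S(\tl{id}\tr{f})$ because $C$ and $S$ agree on the image of $\subc\+{\subc}\op{\subc}$; symmetrically for $\tl{f}\tr{id}$, where the jointly-in-$\Cepi$-part projection identifies $\CospanToCorel$ applied to the pushout of $\tl{f}\tr{id}$ with the corelation whose canonical cospan is $\tr{id}\tl{f}$, matching $C(\tr{id}\tl{f}) = S(\tl{f}\tr{id})$. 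Since both $H\circ\SpanToCorel$ and $S$ are functors agreeing on a generating set of morphisms, they are equal.

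The main obstacle I anticipate is the well-definedness step: making the identification of $C$ on $\Cmono$-related cospans genuinely rigorous requires carefully handling the fact that the cospan composites in $\Cospan{\catC}$ involve pushouts that need not be "on the nose," so one must argue up to the canonical isomorphisms, and one must verify that the composite span $\tl{id}\tr{m}\poi\tl{m}\tr{id}$ really is the identity in $\Spanc{\subc}$ --- i.e.\ that the relevant pullback, taken in $\subc$, agrees with the pullback in $\catC$ and yields $N$. This is exactly the kind of bookkeeping the paper defers to Appendix~\ref{sec:proof}, and costability of the factorisation system is what guarantees these composites behave well enough for the argument to close.
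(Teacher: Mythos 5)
Your proposal is correct and follows essentially the same route as the paper's Appendix~\ref{sec:proof}: uniqueness from fullness of $\CospanToCorel$, and existence by showing $C$ is constant on $\CospanToCorel$-fibres via the key cancellation $C(\tr{m}\tl{m}) = S(\tl{\id}\tr{\id}) = \id$, which uses exactly that $m \in \Cmono \subseteq \subc$ is mono so the pullback of $\tr{m}\tl{m}$ is the identity span. The only (inessential) difference is that you verify invariance one generating $\Cmono$-step of the corelation equivalence at a time, whereas the paper routes both cospans through the canonical jointly-in-$\Cepi$ representative via Proposition~\ref{lemma:charCospanToCorel}; your displayed three-fold cospan decomposition is slightly garbled (it should read $\tr{f'}\tl{g'} = (\tr{f}\tl{\id})\poi(\tr{m}\tl{m})\poi(\tr{\id}\tl{g})$), but the intended computation is the right one.
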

We leave a complete proof of this theorem to Appendix~\ref{sec:proof}. In a
nutshell, the key point is that, in light of~\eqref{eq:defcorelations},
$\Corel{\catC}$ differs from $\Cospan{\catC}$ precisely because it has the extra
equations $\tr{m}\tl{m} = \tr{\id}\tl{\id}$, with $\tr{m} \in \Cmono$. But these
equations arise by pullback squares in $\subc$, and so are equations of zigzags
in $\Span{\subc}$ (\emph{cf.}\ \eqref{eq:quotientPb} below). Moreover, the
remaining equations of $\Span{\subc}$ can be generated by using these together
with a subset of the equations of $\Cospan{\catC}$. Hence adding the equations
of $\Span{\subc}$ to those of $\Cospan{\catC}$ gives precisely $\Corel{\catC}$,
and we have a pushout square.

\medskip

We now discuss some observations, consequences, and examples.

\begin{rmk}\label{rmk:Msuffices}
If any such $\subc$ exists, then we may always take $\subc=\Cmono$ and
the theorem holds. We record the above, more general, theorem as it explains
preliminary results in this direction already in the literature; see the abelian
case and examples for details.
\end{rmk}

\begin{example}
To briefly fix some examples in mind, we remark that
Assumption~\ref{ass:thcorelations} is satisfied by the (epi, mono) factorisation
systems in the categories of (finite) sets and functions, vector spaces and
linear transformations, and any presheaf category. Here, following
Remark~\ref{rmk:Msuffices}, we simply take the subcategory $\subc$ to be the
subcategory of monos, although other choices also suffice. It can also be shown
that, given any pair of categories equipped with data satisfying
Assumption~\ref{ass:thcorelations}, new examples may be obtained by taking their
product and coproduct. Theorem~\ref{thm:corelations} thus applies to all these
cases. We will explore some of these examples, and many others, in detail in
Section~\ref{sec:examples}.
\end{example}

We may easily formulate the dual version of the theorem, which yields a characterisation for relations. It is based on a dual version of Assumption \ref{ass:thcorelations}.

\begin{ass}\label{ass:threlations}
  Let $\catC$ be a category with
  \begin{itemize}
    \item pushouts and pullbacks;
  \item a stable factorisation system $\fact{\Cepi}{\Cmono}$ with $\Cepi$ a subcategory of the
  epis in $\catC$; 
  \item a subcategory $\subc$ of $\catC$ containing $\Cepi$, stable under
pushout, and such that the canonical map given by the pullback of the pushout of
a span in $\subc$ lies in $\Cepi$.
  \end{itemize}
\end{ass}

\begin{cor}[Dual case] \label{cor.dual}
  With $\catC$ and
  $\subc$ as in Assumption \ref{ass:threlations}, the following is a pushout in $\CAT$.
  \begin{equation}\tag{$\triangle$}\label{eq:pushoutRel}
    \begin{aligned}
      \xymatrix@C=40pt{
	{\subc \+{\subc} \op{\subc}} \ar[r] \ar[d] & {\Cospan{\subc}}
	\ar[d]^{} \\
	{\Span{\catC}} \ar[r]_-{} & {\Rel{\catC}} \pullbackcorner
      }
    \end{aligned}
  \end{equation}
\end{cor}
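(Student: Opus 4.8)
The plan is to derive Corollary~\ref{cor.dual} directly from Theorem~\ref{thm:corelations} by invoking duality, rather than repeating the argument. The key observation is that Assumption~\ref{ass:threlations} is literally Assumption~\ref{ass:thcorelations} read in the opposite category: if $\catC$ satisfies Assumption~\ref{ass:threlations} with stable factorisation system $(\Cepi,\Cmono)$ and subcategory $\subc$, then $\op{\catC}$ has pushouts and pullbacks, carries the costable factorisation system $(\op{\Cmono},\op{\Cepi})$ (with $\op{\Cepi}$ a subcategory of the monos of $\op\catC$), and $\op{\subc}$ contains $\op{\Cepi}$, is stable under pullback in $\op\catC$, and satisfies the pushout-of-pullback condition there --- because pullbacks in $\op\catC$ are pushouts in $\catC$ and vice versa. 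So $\op\catC$, $\op\subc$ satisfy Assumption~\ref{ass:thcorelations}.

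Next I would match up the four corners. There are canonical isomorphisms of categories $\Span{\catC} \cong \op{\left(\Cospan{\op\catC}\right)}$ and $\Cospan{\subc}\cong\op{\left(\Span{\op\subc}\right)}$, obtained by reversing every arrow in a (co)span; these are identity-on-objects and respect composition since pullbacks of spans in $\catC$ are pushouts of cospans in $\op\catC$. Likewise $\Rel{\catC}\cong\op{\left(\Corel{\op\catC}\right)}$: comparing Definition~\ref{def:corel} for $\Corel{\op\catC}$ with the dual clause defining $\Rel{\catC}$, the equivalence relation on cospans in $\op\catC$ generated by $\op\Cmono = \op\Cepi \dots$ wait, more carefully: $\Corel{\op\catC}$ is built from the costable system $(\op\Cmono,\op\Cepi)$ on $\op\catC$, whose $\Cmono$-part is $\op\Cepi$, and quotients cospans of $\op\catC$ (= spans of $\catC$) by maps in $\op\Cepi$ (= maps in $\Cepi$, reversed); this is exactly the defining quotient of $\Rel{\catC}$. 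Finally, one checks that $\op{\subc}\+{\op\subc}\subc \cong \op{\left(\subc\+{\subc}\op\subc\right)}$, since reversing a zigzag in $\subc$ is a zigzag in $\op\subc$, and this respects concatenation; and that under all these identifications the functors $\SpanToCorel\colon\Span{\op\subc}\to\Corel{\op\catC}$ and $\CospanToCorel\colon\Cospan{\op\catC}\to\Corel{\op\catC}$ from Theorem~\ref{thm:corelations}, applied to $\op\catC$, correspond to the (so-far unnamed) functors $\Cospan{\subc}\to\Rel{\catC}$ and $\Span{\catC}\to\Rel{\catC}$ in diagram~\eqref{eq:pushoutRel}. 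This last point is essentially unwinding the constructions: $\CospanToCorel$ sends a cospan to its equivalence class, which dualises to sending a span to its relation-class; $\SpanToCorel$ takes the pushout of the pullback and projects onto the $\Cepi$-part, which dualises to taking the pullback of the pushout and projecting onto the $\Cepi$-part of $\catC$ --- matching the description in Corollary~\ref{cor.dual}.

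Having established that~\eqref{eq:pushoutRel} is, under these isomorphisms, the $\op{(-)}$-image of the square~\eqref{eq:pushoutCorel} for $\op\catC$, the conclusion follows because $\op{(-)}\colon\CAT\to\CAT$ is an equivalence (indeed an isomorphism) of categories, hence preserves and reflects pushouts. Applying Theorem~\ref{thm:corelations} to $\op\catC$ gives that the $\op\catC$-version of~\eqref{eq:pushoutCorel} is a pushout in $\CAT$; transporting along the isomorphism $\op{(-)}$ yields that~\eqref{eq:pushoutRel} is a pushout in $\CAT$.

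The routine but slightly fiddly part --- and the only place real care is needed --- is verifying that the named functors in~\eqref{eq:pushoutRel} really are the duals of $\SpanToCorel$ and $\CospanToCorel$ from the corelational setting; in particular that the ``pullback of the pushout, then project to the $\Cepi$-part'' recipe implicit in Corollary~\ref{cor.dual} (inherited from the dualisation in Definition~\ref{def:corel} and Proposition~\ref{prop:spantocorel}) coincides on the nose with $\op{(-)}$ applied to ``pushout of the pullback, then project to the $\Cmono$-part''. This is purely a matter of chasing the definitions through arrow-reversal and does not require any new idea; once it is in place, the duality argument closes immediately. Alternatively, one may simply remark that the entire development of Section~\ref{sec:corelations} and the proof of Theorem~\ref{thm:corelations} in Appendix~\ref{sec:proof} is self-dual, so every statement has a dual obtained by the formal substitution $\Cepi\leftrightarrow\Cmono$, $\Cospan{}\leftrightarrow\Span{}$, pushout $\leftrightarrow$ pullback, and Corollary~\ref{cor.dual} is the dual of Theorem~\ref{thm:corelations} in this sense.
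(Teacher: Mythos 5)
Your proposal is correct and is essentially the paper's own argument: the paper likewise observes that Assumption~\ref{ass:threlations} for $(\catC,\subc)$ is Assumption~\ref{ass:thcorelations} for $(\op\catC,\op\subc)$ with the costable system $(\op\Cmono,\op\Cepi)$, identifies $\Cospan{\subc}=\Span{\op\subc}$, $\Span{\catC}=\Cospan{\op\catC}$, $\Rel{\catC}=\Corel{\op\catC}$ and $\subc\+{\subc}\op\subc=\op\subc\+{\subc}\op{(\op\subc)}$, and then applies Theorem~\ref{thm:corelations} to $\op\catC$. The only cosmetic difference is that you route the identifications through an outer $\op{(-)}$ (using that $\op{(-)}$ is an isomorphism of $\CAT$ and so preserves pushouts), whereas the paper uses the direct leg-swapping identifications; these agree because categories of (co)spans are self-dual.
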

\begin{proof}
  This corollary is obtained by noting that, given a stable factorisation system
  $(\Cepi,\Cmono)$ in $\catC$, with $\Cepi$ a subcategory of the epis, we have a
  costable factorisation system $(\op{\Cmono},\op{\Cepi})$ in $\op{\catC}$, with
  $\op\Cepi$ a subcategory of the monos. Proposition \ref{prop:spantocorel} then
  gives a functor $\Cospan{\subc}= \Span{\op\subc} \longrightarrow \Rel{\catC} =
  \Corel{\op\catC}$. Noting also that $\subc \+{\subc} \op{\subc} = \op\subc
  \+{\subc} \op{(\op\subc)}$ and $\Span{\catC} = \Cospan{\op\catC}$, we can
  hence apply Theorem~\ref{thm:corelations}.
\end{proof}

\begin{rmk}
  In Theorem~\ref{thm:corelations}, the diagram $\Cospan{\catC} \leftarrow \subc
  \+{\subc} \subc \to \Span{\subc}$ `knows' only about $\Cmono$, not the
  factorisation system $(\Cepi,\Cmono)$. One might then wonder how the pushout
  could be known to be $\Corel{\catC}$, since this is the category of
  jointly-in-$\Cepi$ cospans. This is enough, however, since if $\Cmono$ is part
  of a factorisation system, then the factorisation system is unique.

  Indeed, suppose we have $\Cepi$, $\Cepi'$ such that both $(\Cepi,\Cmono)$ and
  $(\Cepi',\Cmono)$ are factorisation systems. Take $e \in \Cepi$. Then the
  factorisation system $(\Cepi',\Cmono)$ gives a factorisation $e = e';m_1$,
  while $(\Cepi,\Cmono)$ gives a factorisation $e' = e_2;m_2$.
  By substitution, we have $e = e_2;m_2;m_1$. By uniqueness of factorisation, we can then assume without loss of
  generality that $e = e_2$ and $m_2;m_1 = id$. Next, using $e=e_2$ and
  substitution in $e' = e_2;m_2$, we similarly arrive at $m_1;m_2 = id$. Thus $m_1$ is
  an isomorphism, and hence lies in $\Cepi'$. This implies that $e = e';
  m_1 \in \Cepi'$, and hence $\Cepi \subseteq \Cepi'$. We may similarly show
  that $\Cepi' \subseteq \Cepi$, and hence that the two categories are equal.
\end{rmk}


The next corollary is instrumental in giving categories of (co)relations a presentation by generators and equations.

\begin{cor}\label{cor:presentation} Suppose $\subc$ and $\catC$ are as in Assumption \ref{ass:thcorelations}. Then $\Corel{\catC}$ is freely generated by the objects of $\catC$ and arrows $\tr{f}$, $\tl{g}$ of $\catC$ quotiented by 
\begin{align}
 \tr{f \in \subc}\tl{g \in \subc} \ = \ \tl{p \in \subc}\tr{q \in \subc} && \text{whenever }\tl{p}\tr{q} \text{ pulls back } \tr{f}\tl{g} \label{eq:quotientPb} \\
 \tl{f \in \catC}\tr{g \in \catC} \ = \ \tr{p \in \catC}\tl{q \in \catC} && \text{whenever }\tr{p}\tl{q} \text{ pushes out } \tl{f}\tr{g}. \label{eq:quotientPo}
\end{align}
Equivalently, $\Corel{\catC}$ is the quotient of $\Cospan{\catC}$ by~\eqref{eq:quotientPb}. A dual statement holds for $\Rel{\catC}$.
\end{cor}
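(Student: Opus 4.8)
The plan is to derive the corollary from Theorem~\ref{thm:corelations} by expressing the four corners of the pushout square~\eqref{eq:pushoutCorel} via presentations and then computing the pushout presentation-wise. First I would record a presentation of $\Cospan{\catC}$: since $\catC$ has pushouts, every cospan $X\tr{f}N\tl{g}Y$ is the composite $\tr{f}\poi\tl{g}$ of $\tr{f}\tl{\id}$ with $\tr{\id}\tl{g}$, so $\Cospan{\catC}$ is generated, over the objects of $\catC$, by the arrows $\tr{f}$ and $\tl{g}$ for $f,g$ in $\catC$. The relations that hold among formal composites of these are exactly (i) functoriality of the two maps $\catC\to\Cospan{\catC}$ and $\op{\catC}\to\Cospan{\catC}$ of~\eqref{eq:functorsCToSpanCospan}, i.e.\ $\tr{f}\poi\tr{g}=\tr{f\poi g}$, $\tl{f}\poi\tl{g}=\tl{g\poi f}$, $\tr{\id}=\id=\tl{\id}$, together with (ii) the pushout equations~\eqref{eq:quotientPo}: any word in the generators reduces, using (i) and (ii), to a normal form $\tr{f}\poi\tl{g}$, and two such words represent the same cospan precisely when (i)--(ii) identify them. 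Dually, using that $\subc$ is stable under pullback in $\catC$ (so that the pullbacks computing composition in $\Span{\subc}\subseteq\Span{\catC}$ remain inside $\subc$), $\Span{\subc}$ is presented by the generators $\tr{f}$, $\tl{g}$ with $f,g$ in $\subc$, subject to functoriality in $\subc$ and the pullback equations~\eqref{eq:quotientPb} restricted to $\subc$. Finally $\subc\+{\subc}\op{\subc}$ is, by its very definition, presented by those same generators subject only to functoriality in $\subc$, and both legs $\Cospan{\catC}\tl{}\subc\+{\subc}\op{\subc}\tr{}\Span{\subc}$ send each generator to the generator of the same name.

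Next I would compute the pushout in $\CAT$ presentation-wise: the pushout of $A\tl{}C\tr{}B$ of presented categories is presented by the disjoint union of the generators and relations of $A$ and $B$ together with, for each generator of $C$, a relation equating its images in $A$ and $B$ (and likewise for objects). Applied to~\eqref{eq:pushoutCorel} with $A=\Cospan{\catC}$, $B=\Span{\subc}$, $C=\subc\+{\subc}\op{\subc}$: the objects are those of $\catC$; the generators are the $\tr{f},\tl{g}$ for $f,g$ in $\catC$ (from $A$) and the $\tr{f},\tl{g}$ for $f,g$ in $\subc$ (from $B$), but the gluing relations identify each generator of $B$ with the like-named generator of $A$, leaving exactly one $\tr{f}$ and one $\tl{g}$ per arrow of $\catC$; the relations are functoriality (from $A$; the $\subc$-functoriality contributed by $B$ and $C$ is then redundant), the pushout equations~\eqref{eq:quotientPo} (from $A$), and the pullback equations~\eqref{eq:quotientPb} (from $B$). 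This is precisely the claimed presentation of $\Corel{\catC}$, and since dropping~\eqref{eq:quotientPb} returns the presentation of $\Cospan{\catC}$, the same computation exhibits $\Corel{\catC}$ as the quotient of $\Cospan{\catC}$ by~\eqref{eq:quotientPb}. The dual statement for $\Rel{\catC}$ follows by running the argument through Corollary~\ref{cor.dual}, equivalently by passing to $\op{\catC}$.

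I expect the main obstacle to be the first step: verifying that the listed relations genuinely present $\Cospan{\catC}$ (and, dually, $\Span{\subc}$). Reducing a generator word to the form $\tr{f}\poi\tl{g}$ by repeatedly replacing each internal $\tl{g}\poi\tr{f}$ by its pushout via~\eqref{eq:quotientPo} and collapsing same-direction composites is routine; the point that needs care is that~\eqref{eq:quotientPo} together with functoriality already absorb the isomorphism-of-cospans identification built into $\Cospan{\catC}$---for instance that $\tr{\phi}\poi\tl{\phi}=\id$ is derivable whenever $\phi$ is an isomorphism---so that no further relations survive. This is folklore about cospan categories, and once it is granted the rest is a purely formal pushout-of-presentations computation justified by Theorem~\ref{thm:corelations}; the only other thing to confirm is the footnote's caveat that $\Span{\subc}$, defined as the subcategory of $\Span{\catC}$ on spans in $\subc$, is closed under the pullback composition, which is exactly what stability of $\subc$ under pullback provides.
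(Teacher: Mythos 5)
Your proposal is correct and matches the paper's (implicit) derivation: the corollary is read off from Theorem~\ref{thm:corelations} by presenting $\Cospan{\catC}$ and $\Span{\subc}$ via the folklore distributive-law presentations (generators $\tr{f},\tl{g}$ with functoriality plus the pushout equations~\eqref{eq:quotientPo}, resp.\ the pullback equations~\eqref{eq:quotientPb}), and computing the pushout of identity-on-objects functors presentation-wise. The one ingredient you flag as needing verification --- that functoriality and~\eqref{eq:quotientPo} really do present $\Cospan{\catC}$ --- is exactly the result the paper delegates to the distributive-law literature (Lack, Rosebrugh--Wood), so nothing further is required.
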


Note that, in light of Remark \ref{rmk:Msuffices}, one may also replace \eqref{eq:quotientPb} by the subset of axioms
\begin{align*}
 \tr{f \in \Cmono}\tl{g \in \Cmono} \ = \ \tl{p \in \Cmono}\tr{q \in \Cmono} && \text{whenever }\tl{p}\tr{q} \text{ pulls back } \tr{f}\tl{g}.
\end{align*}
As $\Cmono \subseteq \subc$ by Assumption~\ref{ass:thcorelations}, this may give a smaller presentation.

Corollary \ref{cor:presentation} provides a modular recipe for axiomatising categories of
(co)relations: a presentation for $\Corel{\catC}$ can be obtained by ``merging'' equational theories arising from $\subc$-pullbacks and $\catC$-pushouts, provided respectively by sets of equations \eqref{eq:quotientPb} and \eqref{eq:quotientPo}. t is worth noticing that equations \eqref{eq:quotientPb}-\eqref{eq:quotientPo} enjoy an elegant description in terms of \emph{distributive laws} of categories, see \cite{RosebrRWood_fact,Lack2004a,ZanasiThesis}. In practice, the reduction of Corollary \ref{cor:presentation} is particularly useful when $\catC$ is ``nice enough'' to allow for a \emph{finitary} axiomatisation of equations \eqref{eq:quotientPb}-\eqref{eq:quotientPo}--- two such examples are the category of finite sets, in which pullbacks decompose according to the laws of extensive categories \cite{Lack2004a}, and the category of finite-dimensional vector spaces, where pullbacks/pushouts can be computed by kernels/cokernels of matrices \cite{interactinghopf}. We shall see numerous applications of the recipe of Corollary \ref{cor:presentation} in the examples of Section \ref{sec:examples}
below. 

In the remainder of this section, we discuss the consequences and
generalisations of the main theorem when various additional structure is
present, including the cases of abelian categories, monoidal categories, and
props.

\subsection{The abelian case} 
As a notable instance of Theorem~\ref{thm:corelations}, we can specialise to the
case of abelian categories and their epi-mono factorisation system. In this case
we can simply pick $\subc$ to be $\catC$ itself.

\begin{cor}[Abelian case]
 \label{thm:pushoutAbelian}
  Let $\catC$ be an abelian category. Then the following is a pushout square in
  $\CAT$:
\[
\begin{aligned}
    \xymatrix@C=40pt{
    {\catC \+{\catC} \op{\catC}} \ar[r]
    \ar[d] & {\Span{\catC}}
    \ar[d]^{\SpanToCorel} \\
    {\Cospan{\catC}} \ar[r]_-{\CospanToCorel} & {\Corel{\catC}} \cong \Rel{\catC} \pullbackcorner
    }
  \end{aligned}
\]
  where we take (co)relations with respect the epi-mono factorisation system. 
\end{cor}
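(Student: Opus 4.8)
The plan is to recognise this as the instance $\subc = \catC$ of Theorem~\ref{thm:corelations}, so that the bulk of the work is to check that an abelian category $\catC$, equipped with its epi--mono factorisation system and with $\subc = \catC$, meets Assumption~\ref{ass:thcorelations}. An abelian category is finitely bicomplete, hence has pushouts and pullbacks; its epi--mono factorisation system is bistable and therefore in particular costable, and $\Cmono$ is the (wide) subcategory of monomorphisms, which is trivially a subcategory of the monos. With $\subc = \catC$, the requirements that $\subc$ contain $\Cmono$ and be stable under pullback are immediate.

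The only substantive clause of Assumption~\ref{ass:thcorelations} is that, for every cospan $X \tr{f} A \tl{g} Y$ in $\catC$, the canonical map $Q \to A$ obtained by first taking the pullback $X \tl{} P \tr{} Y$ and then the pushout $X \tr{} Q \tl{} Y$ is a monomorphism. I would check this by the routine computation available in any abelian category: the pullback $P$ is $\ker\bigl([f,-g] \maps X \tns Y \to A\bigr)$, so the induced comparison $P \to X \tns Y$ is a mono whose image is exactly $\ker\bigl([f,g] \maps X \tns Y \to A\bigr)$; the pushout $Q$ of the resulting span is then the cokernel of that mono, so $[f,g]$ factors through the quotient $X \tns Y \twoheadrightarrow Q$ via a map $Q \to A$ whose kernel is zero --- its image in $A$ being the subobject $\mathrm{im}(f) + \mathrm{im}(g)$. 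This produces the required monomorphism, so Assumption~\ref{ass:thcorelations} holds and Theorem~\ref{thm:corelations} delivers the displayed pushout square with $\subc = \catC$.

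It remains to identify the pushout corner, i.e.\ to see that $\Corel{\catC} \cong \Rel{\catC}$. One could exhibit this isomorphism directly --- it is the familiar one carrying a jointly-in-$\Cepi$ cospan $X \tns Y \twoheadrightarrow N$ to the relation given by the subobject $\ker(X\tns Y \twoheadrightarrow N)$ of $X \tns Y$, as in the linear instance of Example~\ref{ex.corels} --- but I would instead argue formally. Since $\op{\catC}$ is again abelian and carries the epi--mono factorisation system, applying the pushout square just established to $\op{\catC}$ and using the identifications $\Corel{\op{\catC}} = \Rel{\catC}$, $\Span{\op{\catC}} = \Cospan{\catC}$, $\Cospan{\op{\catC}} = \Span{\catC}$, and $\op{\catC} \+{\catC} \op{(\op{\catC})} = \catC \+{\catC} \op{\catC}$ --- which together carry one span of functors to the other --- exhibits $\Rel{\catC}$ as a pushout of the \emph{same} cospan $\Span{\catC} \tl{} \catC\+{\catC}\op{\catC} \tr{} \Cospan{\catC}$ whose pushout is $\Corel{\catC}$ (equivalently, this is Corollary~\ref{cor.dual} with $\subc = \catC$). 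By uniqueness of pushouts there is then a canonical isomorphism $\Corel{\catC} \cong \Rel{\catC}$, which, all functors involved being identity-on-objects, is itself identity-on-objects and compatible with $\CospanToCorel$ and $\SpanToCorel$; this completes the proof.

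I do not anticipate a real obstacle: the sole genuine computation is the ``pushout of the pullback is a mono'' fact, which is standard abelian-category diagram chasing. The only mild subtlety is the observation that the two cospans of functors whose pushouts compute $\Corel{\catC}$ and $\Rel{\catC}$ are literally the same, so that no comparison functor between the two needs to be built by hand.
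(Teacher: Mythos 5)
Your proposal is correct and follows essentially the same route as the paper: verify Assumption~\ref{ass:thcorelations} with $\subc=\catC$ by computing the pullback as $\ker[f,-g]$ and the pushout as a cokernel, so that the canonical comparison is the image inclusion $\mathrm{im}[f,g]\hookrightarrow A$ and hence mono, then invoke Theorem~\ref{thm:corelations}, and finally obtain $\Corel{\catC}\cong\Rel{\catC}$ by observing (via Corollary~\ref{cor.dual}, i.e.\ duality) that $\Rel{\catC}$ is the pushout of the very same span of functors. The paper phrases the second half as checking Assumption~\ref{ass:threlations} directly rather than passing to $\op{\catC}$, but this is the same argument.
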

\begin{proof}
  As $\catC$ is abelian, it is finitely bicomplete and has a bistable
  factorisation system given by epis and monos.  Furthermore, we need not
  restrict our spans to some subcategory $\subc$: in an abelian category the
  pullback of a cospan $X\tr{f} A\tl{g}Y$ can be computed via the kernel of the
  joint map $X \oplus Y\tr{[f,-g]}A$, and similarly pushouts can be computed via
  cokernel, whence the canonical map from the pushout of the pullback of a given
  cospan to itself is always mono, being the inclusion of the image of the joint
  map into the apex.  Similarly, the map from a span to the pullback of its
  pushout is simply the joint map with codomain restricted to its image, and
  hence always epi.  Thus $\catC$ meets both Assumptions \ref{ass:thcorelations}
  and \ref{ass:threlations} with $\subc = \catC$.  Then the category of
  corelations is the pushout of the span $\Cospan{\catC} \leftarrow \catC
  \+{\catC} \op{\catC} \to \Span{\catC}$. But by the dual theorem (Corollary
  \ref{cor.dual}), the pushout of this span is also the category of relations.
  Thus the two categories are isomorphic.  Explicitly, the isomorphism is given
  by taking a corelation to the jointly mono part of its pullback span, and
  taking a relation to the jointly epi part of its pushout cospan.
\end{proof}

\subsection{The monoidal case}\label{sec:monoidal}
Categories of (co)spans and (co)relations are often monoidal categories with,
for example, the monoidal product coming from (co)products. Our theorem extends
to this monoidal case. 

For monoidal structure on $\catC$ to extend to the categories of (co)spans and
(co)relations, it is crucial that the monoidal product respects the ambient
structure. Let $(\catC,\tns)$ be a symmetric monoidal category with pushouts,
and let $(\subc,\tns)$ be a sub-symmetric monoidal category. We say that {\bf
the monoidal product preserves pushouts} in $\subc$ if, for all spans $N
\leftarrow Y \to M$ and $N' \leftarrow Y' \to M'$ in $\subc$, we have an
isomorphism
  \[
    (N \tns N') +_{Y \tns Y'} (M \tns M') \cong (N +_Y M) \tns (N' +_{Y'} M').
  \]
  Note that this pushout is taken in $\catC$. This condition holds, for example,
  whenever $\catC$ is monoidally closed. We say the monoidal product preserves
  pullbacks if the analogous condition holds for pullbacks.

  Furthermore, we say that a subcategory $\subc$ is {\bf closed under $\tns$}
  if, given morphisms $f,g$ in $\subc$, the morphism $f\oplus g$ is also in
  $\subc$.

\begin{prop}\label{thm:corelationsSMCs}
  Let $\catC$ and $\subc$ be symmetric monoidal categories satisfying Assumption
  \ref{ass:thcorelations}. Suppose that the monoidal product of $\catC$ preserves
  pushouts in $\catC$ and pullbacks in $\subc$, and that $\Cmono$ and $\subc$ are
  closed under the monoidal product. Then the following is a pushout square in the
  category of symmetric monoidal categories and (lax, strong, strict) monoidal
  functors.
  \[
    \label{eq:pushoutCorelSMCs}
    \begin{aligned}
    \xymatrix@C=40pt{
	  {\subc +_{\lvert \subc\rvert} \op{\subc}} \ar[r] \ar[d] & {\Spanc{\subc}}
	  \ar[d]^{\SpanToCorel} \\
	  {\Cospan{\catC}} \ar[r]_-{\CospanToCorel} & {\Corel{\catC}} \pullbackcorner.
    }
    \end{aligned}
  \]
\end{prop}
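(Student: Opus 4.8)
The plan is to bootstrap from Theorem~\ref{thm:corelations}, which already gives the pushout square on the level of plain categories, by promoting each ingredient to a symmetric monoidal category and each functor to a strict monoidal functor, and then checking that the universal property upgrades. First I would verify that all four corners of the square carry symmetric monoidal structure: $\Cospan{\catC}$ and $\Span{\subc}$ inherit a monoidal product from $\catC$ (resp.\ $\subc$) precisely because the monoidal product preserves pushouts in $\catC$ and pullbacks in $\subc$ --- this is what makes composition of (co)spans interchange with $\tns$, so that $\tns$ is functorial on (co)spans; $\subc +_{\lvert\subc\rvert}\op{\subc}$ gets its monoidal structure from $\subc$ being closed under $\tns$ (so that concatenation of zigzags and componentwise $\tns$ interchange, using that $\tns$ is functorial on $\subc$ itself); and $\Corel{\catC}$ inherits it from $\Cospan{\catC}$ via $\CospanToCorel$, using that $\Cmono$ is closed under $\tns$ so that the defining equivalence relation~\eqref{eq:defcorelations} is a monoidal congruence. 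Then I would check that the four functors in the square --- the two ioo functors out of $\subc+_{\lvert\subc\rvert}\op{\subc}$, $\CospanToCorel$, and $\SpanToCorel$ --- are all strict monoidal, which is immediate once one observes that they are identity-on-objects and defined by taking (co)limits of zigzags, operations that commute with the componentwise monoidal product by the preservation hypotheses.

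Next I would establish the universal property. Suppose we are given a symmetric monoidal category $\mathcal{D}$ together with monoidal functors $\Cospan{\catC}\to\mathcal D$ and $\Span{\subc}\to\mathcal D$ agreeing on $\subc+_{\lvert\subc\rvert}\op{\subc}$. By Theorem~\ref{thm:corelations} there is a unique \emph{functor} $H\colon\Corel{\catC}\to\mathcal D$ making the square commute; the task is to show $H$ is (lax/strong/strict, matching the cocone) monoidal and is the unique such. For this I would use that $\Corel{\catC}$ is generated as a monoidal category by the images of $\Cospan{\catC}$ (indeed $\CospanToCorel$ is full, Proposition~\ref{prop:CospanToCorelFull}, and identity-on-objects, hence also ``monoidally full''), so the monoidal coherence data for $H$ can only be that transported along $\CospanToCorel$ from the monoidal structure of the functor $\Cospan{\catC}\to\mathcal D$; one then checks these laxity/strength constraints are well-defined on $\Corel{\catC}$ (they respect the congruence~\eqref{eq:defcorelations} because that congruence is monoidal and $\Cmono$-generated) and satisfy the coherence axioms (inherited from those in $\Cospan{\catC}$). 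Uniqueness of $H$ as a monoidal functor follows from uniqueness as a plain functor together with the fact that its monoidal structure is forced on generators.

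The main obstacle I expect is the careful verification that the monoidal product is well-defined on $\Corel{\catC}$ --- i.e.\ that the symmetric, transitive closure of the relation in~\eqref{eq:defcorelations} is a \emph{monoidal} congruence on $\Cospan{\catC}$: one must check that if $X\tr{f}N\tl{g}Y$ and $X\tr{f'}N'\tl{g'}Y$ are related via $m\in\Cmono$, then $f\tns h$ and $f'\tns h'$ etc.\ are related via a morphism in $\Cmono$, for which $\Cmono$ being closed under $\tns$ is exactly the needed ingredient, but one also needs the monoidal product of $\catC$ to be functorial so that $m\tns m'$ is again in $\Cmono$ and makes the enlarged diagram commute. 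The dual (symmetric) bookkeeping on $\Span{\subc}$, and the compatibility of $\SpanToCorel$ (pushout-then-project) with $\tns$ --- which uses that $\tns$ preserves pushouts in $\catC$ so that the jointly-in-$\Cepi$ factorisation of a product pushout is the product of the factorisations, again invoking closure of $\Cmono$ (hence $\Cepi$, via uniqueness of factorisations) under $\tns$ --- are the other technical points. Everything else is routine diagram-chasing, and the full details are deferred to Appendix~\ref{app:spansPROP}.
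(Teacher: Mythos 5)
Your proposal is correct and follows essentially the same route as the paper: equip each corner with the monoidal structure induced by $\tns$ (using pushout/pullback preservation for the (co)span categories and closure of $\Cmono$ under $\tns$ to make the corelation equivalence a monoidal congruence), check the four functors are strict monoidal, and then upgrade the universal property of Theorem~\ref{thm:corelations} by transporting the coherence maps of $\Phi$ along the identity-on-objects full functor $\CospanToCorel$ to the induced functor $\theta$. The only cosmetic difference is that the paper observes the lax/strong/strict type of $\theta$ is determined by $\Phi$ alone, whereas you phrase it as matching the whole cocone.
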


The proof of this result relies on an analysis of how monoidal structure carries through the various steps of our pushout construction. The details can be found in Appendix~\ref{app:spansPROP}.

\subsection{The case of props} As mentioned in the introduction, the motivating
examples for our construction are categories providing a semantic interpretation
for circuit diagrams. These are typically props (\textbf{pro}duct and
\textbf{p}ermutation categories~\cite{MacLane1965}): it is thus useful to phrase
our construction in this setting. This is merely a specialisation of the above
result.

Recall that a prop is a symmetric monoidal category with objects the
natural numbers, in which $n \tns m = n+m$. Props form a category $\PROP$ with
morphisms the ioo strict symmetric monoidal functors. A simplification to Theorem \ref{thm:corelations} is that the
coproduct $\catC + \catC'$ in $\PROP$ is computed as $\catC \+{\catC} \catC'$ in
$\CAT$, because the set of objects is fixed for any prop.

\begin{thm}\label{thm:corelationsPROPs}
  Let $\catC$ and $\subc$ be props satisfying Assumption
  \ref{ass:thcorelations}. Suppose that the monoidal product of $\catC$
  preserves pushouts in $\catC$ and pullbacks in $\subc$, and that $\Cmono$
  is closed under the monoidal product. Then we have a pushout square in $\PROP$
\begin{equation}
\label{eq:pushoutCorelProps} \tag{$\circ$}
\begin{aligned}
    \xymatrix@C=40pt{
      {\subc + \op{\subc}} \ar[r] \ar[d] & {\Spanc{\subc}}
      \ar[d]^{\SpanToCorel} \\
    {\Cospan{\catC}} \ar[r]_-{\CospanToCorel} & {\Corel{\catC}} \pullbackcorner.
    }
  \end{aligned}
\end{equation}
\end{thm}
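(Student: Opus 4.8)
The plan is to deduce Theorem~\ref{thm:corelationsPROPs} from the monoidal version Proposition~\ref{thm:corelationsSMCs} by specialising to props and checking that the pushout computed in the category of symmetric monoidal categories coincides with the pushout in $\PROP$. First I would observe that all four categories appearing in \eqref{eq:pushoutCorelProps} are props: $\subc+\op{\subc}$ is a prop since $\subc$ is (concatenation of zigzags of natural-number-indexed morphisms, with $n\tns m = n+m$ inherited), and $\Span{\subc}$, $\Cospan{\catC}$, $\Corel{\catC}$ are props because (co)products in a prop are again natural numbers and the monoidal product of $\catC$ (resp.\ $\subc$) preserves pushouts (resp.\ pullbacks), so the apex of a composite (co)span is again a natural number and $n\tns m=n+m$ is preserved. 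The hypotheses of Proposition~\ref{thm:corelationsSMCs} that $\Cmono$ and $\subc$ are closed under $\tns$ are automatic here: in a prop every hom-set is a set of morphisms between natural numbers and $\oplus$ on morphisms is part of the prop structure, and the assumption on $\Cmono$ is stated explicitly; moreover $\subc\subseteq\catC$ being a sub-prop means it is closed under $\oplus$ by definition. Likewise the functors $\CospanToCorel$ and $\SpanToCorel$ are identity-on-objects strict symmetric monoidal functors, hence morphisms of props.

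Next I would invoke Proposition~\ref{thm:corelationsSMCs} to conclude that \eqref{eq:pushoutCorelProps} is a pushout square in the category of symmetric monoidal categories and strict monoidal functors. It remains to argue that this pushout is also a pushout in $\PROP$. The key point, already flagged in the text, is that for props the coproduct $\catC+\catC'$ in $\PROP$ agrees with the pushout $\catC\+{\catC}\catC'$ over the discrete category on objects taken in $\CAT$ (equivalently in symmetric monoidal categories), simply because the object set $\mathbb{N}$ is fixed and shared. More generally, a colimit of a diagram of props, computed in symmetric monoidal categories, lands in $\PROP$ whenever the diagram consists of identity-on-objects strict monoidal functors between props, because the colimit has object set $\mathbb{N}$, the universal cocone legs are identity-on-objects, and its universal property restricts correctly. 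So the pushout of $\Cospan{\catC}\xleftarrow{}\subc+\op{\subc}\xrightarrow{}\Span{\subc}$ in symmetric monoidal categories is a prop, the cocone maps into it are prop morphisms, and it satisfies the pushout universal property against all prop cocones (since prop morphisms are exactly the ioo strict monoidal functors, a subclass of the strict monoidal functors for which the universal property was established).

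Concretely, I would phrase this as follows: given any prop $\mathcal{D}$ with prop morphisms $\Cospan{\catC}\to\mathcal{D}$ and $\Span{\subc}\to\mathcal{D}$ agreeing on $\subc+\op{\subc}$, the universal property from Proposition~\ref{thm:corelationsSMCs} yields a unique strict symmetric monoidal functor $\Corel{\catC}\to\mathcal{D}$ factoring both; since the two given functors are identity-on-objects and $\CospanToCorel$ is identity-on-objects and (full, hence) surjective on objects in the trivial sense that object sets coincide, the induced functor is necessarily identity-on-objects as well, hence a morphism in $\PROP$. Uniqueness in $\PROP$ follows from uniqueness in the larger category. I expect the main (though still light) obstacle to be the bookkeeping in this last step: making precise that ``the pushout in symmetric monoidal categories, applied to a cospan of ioo strict monoidal functors between props, is a pushout in $\PROP$'', i.e.\ checking that the reflection/coincidence of colimits is not spoiled by the lax/strong variants mentioned in Proposition~\ref{thm:corelationsSMCs} — but since $\PROP$ only uses strict ioo functors and those form a full subcategory closed under the relevant colimits, this is routine.
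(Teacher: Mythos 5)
Your proposal is correct and follows exactly the route the paper takes: the paper's own proof is a one-liner observing that props are structured strict symmetric monoidal categories and prop morphisms are structured strict monoidal functors, so the theorem is an immediate consequence of Proposition~\ref{thm:corelationsSMCs}. You have simply spelled out the bookkeeping the paper leaves implicit (that the four corners are props, that $\subc$ is closed under $\tns$ as a sub-prop, and that the induced mediating functor is identity-on-objects and hence a morphism of $\PROP$), all of which is accurate.
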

\begin{proof}
  Recalling that props are certain structured strict symmetric monoidal
  categories and prop morphisms are structured strict monoidal functors, the
  theorem is an immediate consequence of Proposition~\ref{thm:corelationsSMCs}.
\end{proof}

We also state the prop version of the abelian case. An abelian prop is just a
prop which is also an abelian category and where the monoidal product is the
biproduct.

\begin{cor}
 \label{thm:corelationsAbPROP}
  Suppose that $\catC$ is an abelian prop. The following is a pushout in $\PROP$.
\[
\begin{aligned}
    \xymatrix@C=40pt{
    {\catC + \op{\catC}} \ar[r]
    \ar[d] & {\Span{\catC}}
    \ar[d]^{\SpanToCorel} \\
    {\Cospan{\catC}} \ar[r]_-{\CospanToCorel} & {\Corel{\catC}} \cong \Rel{\catC} \pullbackcorner
    }
  \end{aligned}
\]
\end{cor}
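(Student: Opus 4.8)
The plan is to show that every abelian prop $\catC$, together with the choice $\subc=\catC$, satisfies the hypotheses of Theorem~\ref{thm:corelationsPROPs} and of its dual, and then to read off the statement. The verification of the underlying Assumption~\ref{ass:thcorelations} with $\subc=\catC$ is essentially already contained in the proof of Corollary~\ref{thm:pushoutAbelian}: an abelian category is finitely bicomplete, so $\catC$ has pushouts and pullbacks; its epi--mono factorisation system is bistable, hence in particular costable, with $\Cmono$ the (sub)category of monos; and, taking $\subc=\catC$, the pullback of a cospan $X\tr{f}A\tl{g}Y$ is computed as the kernel of $[f,-g]\colon X\tns Y\to A$ and the subsequent pushout as a cokernel, so that the canonical map from the pushout of the pullback back to $A$ is the inclusion of the image of $[f,-g]$ into $A$, hence a mono. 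Dually Assumption~\ref{ass:threlations} holds with $\subc=\catC$.

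It then remains to check the extra monoidal conditions required by Theorem~\ref{thm:corelationsPROPs} (equivalently, by Proposition~\ref{thm:corelationsSMCs}). In an abelian prop the monoidal product $\tns$ is the biproduct. Viewed as a functor $\catC\times\catC\to\catC$, the biproduct is simultaneously the coproduct functor, left adjoint to the diagonal, and the product functor, right adjoint to the diagonal; hence it preserves all colimits, in particular pushouts, and all limits, in particular pullbacks. Spelled out, this gives exactly the isomorphism $(N\tns N')+_{Y\tns Y'}(M\tns M')\cong(N+_Y M)\tns(N'+_{Y'}M')$ and its pullback analogue, so the monoidal product of $\catC$ preserves pushouts and pullbacks in $\catC=\subc$. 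Finally, $\Cmono$ is closed under $\tns$: for monos $f,g$ one has $\ker(f\tns g)\cong\ker f\tns\ker g=0$, so $f\tns g$ is again a mono. With all hypotheses in place, Theorem~\ref{thm:corelationsPROPs} applied to $\catC$ with $\subc=\catC$ yields the pushout square of the statement, with $\Corel{\catC}$ at the bottom right.

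For the isomorphism $\Corel{\catC}\cong\Rel{\catC}$ I would argue by self-duality, mirroring the category-level argument in Corollary~\ref{thm:pushoutAbelian}. Since an abelian category is again abelian on reversing arrows, and since $\op{\catC}$ has the same objects and the same biproduct and so is again a prop, $\op{\catC}$ is an abelian prop, to which the previous paragraphs apply: the pushout in $\PROP$ of $\Cospan{\op{\catC}}\leftarrow\op{\catC}+\catC\to\Span{\op{\catC}}$ is $\Corel{\op{\catC}}$. Using the identifications $\Cospan{\op{\catC}}=\Span{\catC}$, $\Span{\op{\catC}}=\Cospan{\catC}$, $\op{\catC}+\catC\cong\catC+\op{\catC}$, and $\Corel{\op{\catC}}=\Rel{\catC}$ (the last being the definition of $\Rel{\catC}$ via the costable system on $\op{\catC}$, exactly as in the proof of Corollary~\ref{cor.dual}), this says that $\Rel{\catC}$ is also a pushout of the span $\Cospan{\catC}\leftarrow\catC+\op{\catC}\to\Span{\catC}$ in $\PROP$. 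As this is the very span whose pushout is $\Corel{\catC}$, the universal property forces $\Corel{\catC}\cong\Rel{\catC}$; concretely the isomorphism sends a corelation to the jointly-mono part of its pullback span and back, which is identity-on-objects and strict monoidal because forming jointly-mono and jointly-epi parts commutes with biproducts.

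The proof presents no genuinely difficult step; the only delicate point is the bookkeeping in the self-duality argument, namely checking that, under the displayed identifications, the two structure maps of the span $\Cospan{\catC}\leftarrow\catC+\op{\catC}\to\Span{\catC}$ coming from the application to $\op{\catC}$ really coincide with those coming from the application to $\catC$. This amounts to unwinding the definitions of the functors in~\eqref{eq:functorsCToSpanCospan} under $\op{(-)}$, which is precisely what is done implicitly in the proof of Corollary~\ref{cor.dual}; everything else is a routine instantiation of results already established.
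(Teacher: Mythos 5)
Your proposal is correct and follows essentially the same route as the paper: the paper's proof likewise observes that the biproduct, being both a product and a coproduct, preserves pushouts and pullbacks, that monos are closed under it, and then invokes Theorem~\ref{thm:corelationsPROPs}; the identification $\Corel{\catC}\cong\Rel{\catC}$ via self-duality and uniqueness of pushouts is exactly the argument of Corollary~\ref{thm:pushoutAbelian}, which you simply redo at the prop level. Your write-up is more explicit than the paper's two-sentence proof, but there is no substantive difference in approach.
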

\begin{proof}
  Note that in an abelian prop the biproduct, being both a product and a
  coproduct, preserves both pushouts and pullbacks, and that the monos are
  closed under the biproduct. Thus we can apply
  Theorem \ref{thm:corelationsPROPs}.
\end{proof}

The proof of Theorem \ref{thm:corelationsPROPs} and of Corollary can be found in Appendix \ref{app:spansPROP}.

\subsection{(Co)algebras over a (co)monad} \label{prop.algebras}
It is of interest to identify under which conditions our universal construction of $\Rel{\catC}$ lifts to relations over $\catC^T$, the category of algebras 
of some monad $T \colon \catC \to \catC$.

\begin{prop} \label{prop:algebrasmonad}
  Let $\catC$ be a complete and cocomplete regular category that obeys
  Assumption~\ref{ass:threlations} with respect to its (regular epi, mono)
  factorisation system. Further assume that every regular epi splits.  Next, let
  $T$ be a monad on $\catC$ such that $T$ and $T^2$ preserve pushouts of regular
  epis.  Then the category $\catC^T$ of algebras over $T$ is a complete and
  cocomplete regular category that obeys Assumption~\ref{ass:threlations} with
  respect to its (regular epi, mono) factorisation system. 
\end{prop}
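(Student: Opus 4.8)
Throughout, write $U \colon \catC^T \to \catC$ for the forgetful functor, with left adjoint $F$. \textbf{Completeness} is immediate: $U$ creates limits, so $\catC^T$ inherits all small limits from $\catC$, and $U$ (a right adjoint, and faithful) preserves and reflects monomorphisms.

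The leverage point for everything else is that the hypothesis upgrades to the statement that \emph{$T$ and $T^2$ preserve reflexive coequalisers}. Indeed, if $f, g \colon A \to B$ is a reflexive pair with common section $s$, then $f$ and $g$ are split, hence regular, epimorphisms, and one checks directly that the coequaliser of $f$ and $g$ is exactly the pushout of the span $B \tl{f} A \tr{g} B$; so preservation of pushouts of regular epis gives preservation of reflexive coequalisers. From this we obtain \textbf{cocompleteness}: given a reflexive pair of algebras, form the coequaliser $c$ of the underlying pair in $\catC$ and transport the algebra structure along $c$ --- since $Tc$ and $T^2c$ are again the relevant coequalisers, the structure map is forced (and, if one wants an explicit formula, the splitting hypothesis lets one write it down via a section of the regular epi $c$). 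Thus $\catC^T$ has reflexive coequalisers, computed as in $\catC$; since $\catC$ is cocomplete, Linton's theorem yields that $\catC^T$ is cocomplete. In particular $U$ preserves reflexive coequalisers, and as every regular epi is the coequaliser of its kernel pair (a reflexive pair) it follows that $U$ both preserves and reflects regular epimorphisms.

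For \textbf{regularity} we equip $\catC^T$, which has finite limits, with a pullback-stable (regular epi, mono) factorisation system. Given $h \colon A \to B$ in $\catC^T$, factor $Uh = m \circ e$ in $\catC$ with $e$ a regular epi and $m$ a mono; the kernel pair of $h$ in $\catC^T$ is created by $U$, is reflexive, and its coequaliser in $\catC^T$ has underlying map $e$, so $h$ factors as that coequaliser followed by a morphism with underlying map $m$, hence a mono in $\catC^T$. This gives the factorisation; orthogonality of regular epis against monos transfers from $\catC$ via faithfulness of $U$ and the fact that $TUc$ is epi, and a short diagram chase identifies the left class with the regular epimorphisms. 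Pullback-stability then follows by chaining facts: pullbacks are computed by $U$; $U$ preserves regular epis; $\catC$ is regular, so the pullback of the underlying regular epi is a regular epi; and $U$ reflects regular epis. Hence $\catC^T$ is a regular category carrying a stable (regular epi, mono) factorisation system whose left class lies in the epimorphisms.

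It remains to check that $\catC^T$ \textbf{obeys Assumption~\ref{ass:threlations}}. It is bicomplete, so has pushouts and pullbacks, and carries the stable factorisation system just produced; for the subcategory $\subc$ we take the regular epimorphisms of $\catC^T$, which is a legitimate choice by the dual of Remark~\ref{rmk:Msuffices} provided some such subcategory exists --- as we now confirm. This $\subc$ contains $\Cepi$ and is stable under pushout, so the one substantive condition, \emph{and the step I expect to be the main obstacle}, is that the canonical map from a span of regular epis of $\catC^T$ to the pullback of its pushout is again a regular epi. The plan: the span's underlying maps are regular epis of $\catC$, so by hypothesis $T$ and $T^2$ preserve its pushout in $\catC$, whence that pushout lifts to the pushout in $\catC^T$ (the structure map being forced exactly as in the cocompleteness step, checked via the fact that the pushout legs are regular epis); pullbacks are already computed by $U$; so the entire ``pullback of the pushout'', and its canonical map, are computed by $U$ as in $\catC$; the underlying canonical map is a regular epi of $\catC$ because $\catC$ itself obeys Assumption~\ref{ass:threlations}, with its subcategory taken to be the regular epimorphisms of $\catC$ (again by the dual of Remark~\ref{rmk:Msuffices}); and $U$ reflects regular epis. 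This closes the verification.
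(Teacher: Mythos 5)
Your proof is correct, and on the decisive point---verifying the third item of Assumption~\ref{ass:threlations} in $\catC^T$---it is the same argument as the paper's: the pushout of a span of regular epis in $\catC^T$ is computed on underlying objects because $T$ and $T^2$ preserve it (forcing the algebra structure on the pushout), pullbacks are created by $U$, so the canonical comparison map is the one from $\catC$, which is a regular epi there and is reflected by $U$. The one place you genuinely diverge is the cocompleteness/regularity bookkeeping. The paper leans on the split-epi hypothesis and cites the standard Borceux construction of coequalisers in $\catC^T$ (relegating the splitting to that step), whereas you extract from the pushout-preservation hypothesis that $T$ and $T^2$ preserve reflexive coequalisers---via the observation that a reflexive pair consists of split (hence regular) epis and that its coequaliser coincides with the pushout of the associated span---and then invoke Linton's theorem. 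This is a nice economy: it shows the splitting hypothesis is not really needed for cocompleteness under the stated preservation conditions, and it gives you for free that $U$ preserves reflexive coequalisers and hence preserves and reflects regular epis, which you then reuse for the factorisation system and its pullback-stability. The cost is that you re-derive by hand some facts the paper simply imports from the standard theory of monadic categories; your remark that regular-epi/mono orthogonality ``transfers from $\catC$'' is unnecessary (regular epis are strong epis in any category), but this is cosmetic and does not affect correctness.
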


Note that if $\catC$ is a complete and cocomplete category, then regularity is
just the statement that regular epis are stable under pullback, and
Assumption~\ref{ass:threlations} is just the statement that the canonical map
given by the pullback of the pushout of a span of regular epis is again a
regular epi.

\begin{proof}
The completeness, cocompleteness, and regularity
  of $\catC^T$ under the above conditions is a standard result, see e.g. \cite[Th. 4.3.5]{Bor94}. In brief, the completeness of $\catC^T$ is
  easily inherited from the completeness of $\catC$. The cocompleteness requires
  the regular epis to split and the existence of products.  This allows explicit
  construction of coequalizers in $\catC^T$ and, using the existence of
  coproducts in $\catC$, this immediately implies the cocompleteness of
  $\catC^T$ (see \cite[Prop.  4.3.4]{Bor94}). The stability of regular epis
  under pullback in $\catC^T$ follows from the fact that the forgetful functor
  $U\colon \catC^T \to \catC$ preserves and reflects both pullbacks and regular
  epis. Thus, since regular epis are stable under pullback in $\catC$, the same
  holds true in $\catC^T$.

  It then remains to show that, in $\catC^T$, the canonical map given by the
  pullback of the pushout of a span of regular epis is again a regular epi.
  Since $T$ preserves pushouts of regular epis, such a pushout in $\catC^T$ may
  simply be obtained by computing the pushout $P$ of the underlying maps in
  $\catC$, and using the universal property of the $T$-image of this pushout to
  get a map $c\colon TP \to P$.  Since $T^2$ also preserves pushouts of regular
  epis, it can be shown that this map $c$ is in fact an algebra, and the pushout
  in $\catC^T$ (see \cite[Prop.  4.3.2]{Bor94}). On the other hand, as limits in
  a category of algebras, pullbacks in $\catC^T$ can easily be computed via
  pullbacks in $\catC$.  Thus the canonical map in question is simply the
  canonical map in $\catC$ lifted to a map of algebras, and hence is again a
  regular epi. This proves the claim.
\end{proof}

We may dualise Proposition \ref{prop:algebrasmonad} to the case of corelations
in the category of coalgebras over a comonad; in this case we assume our
category is coregular, and that it is the regular monos that split in $\catC$
and whose pullbacks are preserved by the comonad $T$ and by $T^2$.

Note also that the assumptions above can be varied or weakened to still obtain
that the category of (regular epi, mono) relations in a category of algebras can
be constructed as a pushout. For example, it is straightforward to relax
cocompleteness to, say, finite cocompleteness, or to put conditions on the
properties of $\catC^T$ instead of on $\catC$ and $T$.\footnote{This discussion amends a mistake in the conference version of this paper
\cite{FZ-calco17}. The conditions stated therein were too weak for the above proof to work. More precisely, the problem was that pushouts in $\catC^T$ were not guaranteed to sufficiently
resemble pushouts in $\catC$ for the pushout--pullback condition to hold.}

\section{Examples}\label{sec:examples}

\subsection{Equivalence Relations} \label{ex:er}

Our first example concerns the construction of equivalence relations starting
from injective functions. For $n \in \mathbb{N}$, write $\ord{n}$ for the set
$\{0,1,\dots,n-1\}$, and $\uplus$ for the disjoint union of sets. We fix a prop
$\ER$ whose arrows $n \to m$ are the equivalence relations on
$\ord{n}\uplus\ord{m}$. For composition $e_1 \poi e_2 \maps n \to m$ of
equivalence relations $e_1 \maps n \to z$ and $e_2 \maps z \to m$, one first
defines an equivalence relation on $\ord{n}\uplus\ord{z}\uplus\ord{m}$ by gluing
together equivalence classes of $e_1$ and $e_2$ along common witnesses in
$\ord{z}$, then obtains $e_1 \poi e_2$ by restricting to elements of
$\ord{n}\uplus\ord{m}$. 
 
Equivalence relations are equivalently described as corelations of functions. For this, let $\F$ be the prop whose arrows $n \to m$ are functions from $\ord{n}$ to $\ord{m}$. $\F$ has the usual factorisation system $(\Surj,\Inj)$ given by epi-mono factorisation, where $\Surj$ and $\Inj$ are the sub-props of surjective and of injective functions respectively. Given these data, one can check that $\ER$ is isomorphic to $\Corel{\F}$, the prop of corelations on $\F$.

We are now in position to apply our construction of
Theorem~\ref{thm:corelationsPROPs}. First, we verify
Assumption~\ref{ass:thcorelations} with $\catC$ instantiated as $\F$ and $\subc$
as $\Inj$. The only point requiring some work is the third, which goes as
follows: given a cospan of monos $X \to P \leftarrow Y$, consider $X$, $Y$ as
subsets of $P$. Then the pullback-pushout diagram looks like
\[
  \xymatrix@R=5pt@C=20pt{
    & X \ar[dr] \ar@/^/[drr] \\
    X\cap Y \ar[ur] \ar[dr] && X\cup Y \ar@{-->}[r] & P \\
    & Y \ar[ur] \ar@/_/[urr]
  }
\]
and $X\cup Y \to P$ is the inclusion map, hence a morphism in $\Inj$. Therefore, we
can construct the pushout diagram~(\ref{eq:pushoutCorelProps}) as follows:
\begin{equation}\
\label{eq:pushoutER}
\begin{aligned}
    \xymatrix@R=18pt{
    {\Inj + \op{\Inj}} \ar[r]
    \ar[d]& {\Span{\Inj}}
    \ar[d] \\
    {\Cospan{\F}} \ar[r] & {\ER} \pullbackcorner
    }
  \end{aligned}
\end{equation}

This modular reconstruction easily yields a presentation by generators and relations for (the arrows of) $\ER$. Following the recipe of Corollary \ref{cor:presentation}, $\ER$ is the quotient of $\Cospan{\F}$ by all the equations generated by pullbacks in $\Inj$, as in~\eqref{eq:quotientPb}. Now, recall that $\Inj$ is presented (in string diagram notation~\cite{Selinger2009}) by the generator $\Bunit \colon 0 \to 1$, and no equations. Thus, in order to present all the equations of shape~\eqref{eq:quotientPb} it suffices to consider one pullback square in $\Inj$:
\begin{equation}\label{eq:spaninj}
\vcenter{\xymatrix@=10pt{
& 1 & \\
0\ar[ur]^{\Bunit} && 0 \ar[ul]_{\Bunit} \\
& \ar[ul]^{\EmptyDiag} 0 \ar[ur]_{\EmptyDiag} &
}} \quad \text{, yielding the equation } \Bunit \poi \Bcounit \ = \ \EmptyDiag \poi \EmptyDiag.
\end{equation}
On the other hand, we know $\Cospan{\F}$ is presented by the theory of special
commutative Frobenius monoids (also termed separable Frobenius algebras),
see~\cite{Lack2004a}. Therefore $\ER$ is presented by the generators and
equations of special commutative Frobenius monoids, with the addition
of~\eqref{eq:spaninj}. This is known as the theory of extraspecial commutative
Frobenius monoids~\cite{CF}. This result also appears in~\cite{Zanasi16}, in both cases
without the realisation that it stems from a more general construction.

\subsection{Partial Equivalence Relations} \label{ex:per}

Partial equivalence relations (PERs) are common structures in program semantics, which date back to the seminal work of Scott~\cite{Scott1975-datatypeslattices} and recently revamped in the study of quantum computations (e.g., \cite{Jacobs_quantumPER,HasuoH11_GOIQuantumPER}). Our approach yields a characterisation for the prop $\PER$ whose arrows $n \to m$ are PERs on $\ord{n}\uplus\ord{m}$, with composition as in $\ER$. The ingredients of the construction generalise Example~\ref{ex:er} from total to partial maps. Instead of $\F$ one starts with $\PF$, the prop of partial functions, which has a factorisation system involving the sub-prop of partial surjections and the sub-prop of injections. The resulting prop of $\PF$-corelations is isomorphic to $\PER$. Theorem~\ref{thm:corelationsPROPs} yields the following pushout
\begin{equation}\
\label{eq:pushoutPER}
\begin{aligned}
    \xymatrix@R=18pt{
    {\Inj + \op{\Inj}} \ar[r]
    \ar[d]& {\Span{\Inj}}
    \ar[d] \\
    {\Cospan{\PF}} \ar[r] & {\PER} \pullbackcorner .
    }
  \end{aligned}
\end{equation}
As in Example~\ref{ex:er}, following Corollary \ref{cor:presentation},~\eqref{eq:pushoutPER} reduces the task of axiomatising $\PER$ to the one of axiomatising $\Cospan{\PF}$ and adding the single equation~\eqref{eq:spaninj} from $\Span{\Inj}$. $\Cospan{\PF}$ is presented by ``partial'' special commutative Frobenius monoids, studied in~\cite{Zanasi16}.

\subsection{Non-Example: the Terminal Prop}\label{ex:triv}

 It is instructive to see a non-example, to show that the assumptions on $\subc$ are not redundant. One may want consider an obvious variation of~\eqref{eq:pushoutER}, where instead of $\Inj$ one takes the whole $\F$ as $\subc$. However, with this tweak the construction collapses: the pushout is the terminal prop $\Triv$ with exactly one arrow between any two objects. 
\[
\begin{aligned}
    \xymatrix@R=18pt{
    {\F + \op{\F}} \ar[r]
    \ar[d] & {\Span{\F}} \ar[d]^{} \\
    {\Cospan{\F}} \ar[r]_-{} & {\Triv} \pullbackcorner
    }
  \end{aligned}
\]
This phenomenon was noted before (\cite{interactinghopf}, see also~\cite[Th.
5.6]{HeunenVicaryBook}), however without an understanding of its relationship
with other (non-collapsing) instances of the same construction.
Theorem~\ref{thm:corelationsPROPs} explains why this case fails where others
succeed: the problem lies in the choice of $\F$ as the subcategory $\subc$.
Indeed, the canonical map given by the pullback of the pushout of any span in
$\subc = \F$ does not necessarily lie in $\Inj$, i.e. it may be not injective. An
example is given by the cospan $0 \to 1 \leftarrow 2$, with the canonical map
from the pushout of the pullback cospan the non-injective map $2 \to 1$:
\[
  \xymatrix@R=5pt@C=35pt{
    & 2 \ar[dr] \ar@/^/[drr] \\
    0 \ar[ur] \ar[dr] && 2 \ar@{-->}[r] & 1. \\
    & 0 \ar[ur] \ar@/_/[urr]
  }
\]

\subsection{Non-Example: Spans of Functions} \label{ex:plainrelations}

Another interesting non-example is the attempt of constructing $\Rel{\F}$, the prop of relations between finite sets. Inspired by the case of corelations (Section \ref{ex:er}), one may think of using again $\Inj$ as subcategory $\subc$ and apply the recipe~\eqref{eq:pushoutRel} from Corollary \ref{cor.dual}. However, because $\Cospan{\Inj}$ and $\Inj + \op{\Inj}$ are isomorphic props (the distributive law forming $\Cospan{\Inj}$ does not yield any additional equations), the pushout diagrams just yields spans again.
\begin{equation*}
\begin{aligned}
    \xymatrix@R=18pt{
    {\Inj + \op{\Inj}} \ar[r]
    \ar[d] & {\Span{\F}} \ar[d]^{} \\
    {\Cospan{\Inj}} \ar[r]_-{} & {\Span{\F}} \pullbackcorner
    } 
  \end{aligned}
\end{equation*}
The reason why the construction~\eqref{eq:pushoutRel} does not apply is because
the third item of Assumption~\ref{ass:threlations} fails. The following is a
counterexample, where we start with a span $2 \tl{} 3 \tr{} 2$ of epis, take the
pushout $2 \tr{} 1 \tl{} 2$, then the pullback $2 \tr{} 4 \tl{} 2$, and then
observe that the canonical morphism $3 \to 4$ cannot be epi. As before, the
functions involved are specified in string diagrammatic notation. 
\qquad \qquad
    \[
  \xymatrix@R=8pt@C=55pt{
   & \ar[dl]_{\cgr{Bmult.pdf}} 2 &  \\
  1  &&  4 \ar[ul]_{\cgr{bialg1.pdf}} \ar[dl]^{\cgr{bialg2.pdf}}  &
  \ar@/_3em/[ull]_{\cgr{Frob1.pdf}} \ar@/^3em/[dll]^{\cgr{Frob2.pdf}}
  \ar@{-->}[l] 3. \\
  & \ar[ul]^{\cgr{Bmult.pdf}} 2 & 
  }
\]

\subsection{Isomorphisms} \label{ex:iso}

In a degenerate way, our construction can actually be applied to an arbitrary category $\catC$ with pushouts and pullbacks. Indeed, we noted in Section \ref{sec:corelations}
that every category $\catC$ has a factorisation system $(\catC,\Ciso)$, where
again $\Ciso$ is the subcategory containing exactly the isomorphisms. Note that
$\Ciso$ is always a subcategory of the category of monos in $\catC$: we may then
also take $\subc=\Ciso$. Corelations with respect to this factorisation system are simply cospans.
With this setup, $\catC$ satisfies Assumption~\ref{ass:thcorelations}, and Theorem~\ref{thm:corelations} then states that 
\begin{equation}\label{eq:cubeisos}
\vcenter{
    \xymatrix@C=40pt{
      {\Ciso \+{\Ciso} \op{\Ciso}} \ar[r] \ar[d] & {\Spanc{\Ciso}}
      \ar[d]^{\SpanToCorel} \\
    {\Cospan{\catC}} \ar[r]_-{\CospanToCorel} & {\Cospan{\catC}} \pullbackcorner
    }
   }
\end{equation}
is a pushout square. Within the axiomatic perspective of Corollary \ref{cor:presentation}, \eqref{eq:cubeisos}
states that all the axioms arising from pullbacks of isomorphisms in $\catC$
already arise from pushouts in $\catC$. This is a manifestation of the fact that
any square in $\catC$ formed by taking the pullback of isomorphisms is also a
pushout square.

\subsection{Lattices} \label{ex:latt}
Another ``degenerate'' case is the one of lattices. Recall that a lattice is a
poset with binary joins and meets. Just as for any pre-order, one can see a
lattice $L$ as a category: objects are the lattice elements and there is an
arrow from $a \in L$ to $b \in L$ precisely when $a \leq b$ in the lattice. 

Let us now check Assumption~\ref{ass:thcorelations} for $L$. First, $L$ has
pushouts and pullbacks: they are just given by joins and meets. For the second
and third requirement, we need to fix a factorisation system. Above we noticed
that the factorisation system $(\catC,\Ciso)$ works for any $\catC$; in the case
of a lattice, we may also consider the factorisation $(\Ciso,\catC)$. Indeed,
because there is at most one arrow between any two objects of $L$, every arrow
is both an epi and a mono, and thus every subcategory of $L$ is a subcategory of
the monos in $L$. Also, notice that the only isomorphisms are the identities;
thus the subcategory $\Ciso$ of isomorphisms is just the discrete sub-category $|L|$ formed by the objects of $L$, with their identity arrows. Choosing the factorisation system
$(|L|,L)$, and letting $\subc =L$, then gives data that satisfies
Assumption~\ref{ass:thcorelations}.

(Co)spans for a lattice turn out to be familiar order-theoretic notions. The set of arrows from $a \in L$ to $b \in L$ in $\Cospan{L}$ is equal to the \emph{upper
set} of $a \vee b$, and composition is given by join. Similarly, the set of arrows from $a \in L$ to $b \in L$ in $\Span{L}$ is equal to the \emph{lower set} of $a \wedge b$, and
composition is given by meet. Then the category $\Corel{L}$ is equivalent to the
terminal category $\Triv$; any two cospans in $L$ are considered equivalent as
$(| L |,L)$-corelations.

Theorem~\ref{thm:corelations} thus yields the pushout square
\[
    \xymatrix@C=40pt{
      {L \+{L} \op{L}} \ar[r] \ar[d] & {\Spanc{L}}
      \ar[d] \\
    {\Cospan{L}} \ar[r] & {\Triv.} \pullbackcorner
    }
\]
This collapse is reminiscent of Non-Example~\ref{ex:triv}. We can understand
this as follows. A morphism from $a$ to $b$ in $L\+{L} \op{L}$ is a zigzagging
path from $a$ to $b$ in the lattice $L$, for example $a \le c_1 \ge c_2 \le c_3
\ge \dots \le c_n \ge b$. The category $\Cospan{L}$ considers two paths to be
the same if they have the same least upper bound, while the category $\Spanc{L}$
considers two paths to be the same if they have the same greatest lower bound.
Pushing out these two notions of equivalence, we see that in the category
$\Corel{L}$, all paths between $a$ and $b$ are considered the same, and hence
there is a unique morphism between any two objects. That is to say, $\Corel{L}
\cong \Triv$.

Our example can be generalised from lattices to arbitrary posets with pushouts
and pullbacks. It is straightforward to show that a poset $P$ with all pushouts
and pullbacks need not itself be a lattice, but that elements $a$ and $b$ have a
join if and only if they have a meet, and moreover if and only if they have an
upper bound or lower bound. This means that $P$ must be a coproduct of lattices. 

Suppose that $P$ is the coproduct of $n$ lattices. In this case, we may then use
the same (identity, any morphism) factorisation system $(|P|,P)$ and apply
Theorem~\ref{thm:corelations}. Here the homsets of $\Cospan{P}$ and $\Span{P}$
can still be described as upper sets of joins and lower sets of meets, with the
caveat that the homset is empty when the relevant join or meet does not exist.
The category of $(|P|,P)$-corelations is then equivalent to the coproduct of $n$
copies of the terminal category, and Theorem~\ref{thm:corelations} asserts the usual pushout
square. Thus in this case the pushout of the categories of spans and cospans in
$P$ results in counting the number of lattice summands, or `connected
components', of $P$.

\subsection{Linear Subspaces over a Field} \label{ex:sv}

We now consider an example for the abelian case: the prop $\SV{\field}$ whose arrows $n \to m$ are $\field$-linear subspaces of $\field^n \times \field^m$, for a field $\field$. Composition in $\SV{\field}$ is relational: $V \poi W = \{(v,w) \mid \exists u .(v,u) \in V, (u,w) \in W\}$. Interest in $\SV{\field}$ is motivated by various recent applications. We mention the case where $\field$ is the field of Laurent series, in which $\SV{\field}$ constitutes a denotational semantics for signal flow graphs~\cite{Bonchi2014b,BaezErbele-CategoriesInControl,Bonchi2015,BonchiSZ17}, and the case $\field = \Z_2$, in which $\SV{\field}$ is isomorphic to the phase-free ZX-calculus, an algebra for quantum observables~\cite{CoeckeDuncanZX2011,BialgAreFrob14}.

Now, in order to apply our construction, note that $\SV{\field}$ is isomorphic to $\Rel{\Vect{k}}$, where $\Vect{k}$ is the abelian prop whose arrows $n \to m$ are the linear maps of type $\field^n \to \field^m$ (the monoidal product is by direct sum). This follows from the observation that subspaces of $\field^n \times \field^m$ of dimension $z$ correspond to mono linear maps from $\field^z$ to $\field^n \times \field^m$, whence to jointly mono spans $n \tl{} z \tr{}m$ in $\Vect{k}$.

We are then in position to use Corollary \ref{thm:corelationsAbPROP}, which yields the following pushout characterisation for $\SV{\field}$.
\[
\begin{aligned}
    \xymatrix@R=18pt{
    {\Vect{k} + \op{\Vect{k}}} \ar[r]
    \ar[d]& {\Span{\Vect{k}}}
    \ar[d] \\
    {\Cospan{\Vect{k}}} \ar[r] & {\SV{\field}} \pullbackcorner
    }
  \end{aligned}
\]
This very same pushout has been studied in~\cite{BialgAreFrob14} for the $\field = \Z_2$ case. As before, the modular reconstruction suggests a presentation by generators and relations for $\SV{\field}$, in terms of the theories for spans and cospans in $\Vect{k}$. The axiomatisation of $\SV{\field}$ is called the theory of interacting Hopf algebras~\cite{interactinghopf,ZanasiThesis} , as it features two Hopf algebras structures and axioms expressing their combination.

On the top of existing results on $\SV{\field}$, our Corollary \ref{thm:corelationsAbPROP} suggests a
novel perspective, namely that $\SV{\field}$ can be also thought as the prop of
\emph{corelations} over $\Vect{\field}$. This representation can be understood
by recalling the 1-1 correspondence between subspaces of $\field^n \times
\field^m$ and (solution sets of) homogeneous systems of equations $M v = 0$,
where $M$ is a $z \times (n+m)$ matrix. Writing the block decomposition $M =
(M_1\:\vert\: -M_2)$, where $M_1$ is a $z \times n$ matrix and $M_2$ a $z \times
m$ matrix, this is the same as solutions to $M_1v_1 = M_2v_2$. These systems
then yield jointly epi cospans $n \tr{M_1} z \tl{M_2} m$ in $\Vect{\field}$,
that is, corelations.

\subsection{Linear Corelations over a Principal Ideal Domain} \label{ex:PID}

We now consider the generalisation of the linear case from fields to principal
ideal domains (PIDs). In order to form a prop, we need to restrict our attention
to finitely-dimensional \emph{free} modules over a PID $\PID$. The symmetric
monoidal category of such modules and module homomorphisms, with monoidal
product by direct sum, is equivalent to the prop $\fmod\PID$ whose arrows $n \to
m$ are $\PID$-module homomorphisms $\PID^n \to \PID^m$ or, equivalently,
$m\times n$-matrices in $\PID$. Because of the restriction to free modules,
$\fmod\PID$ is not abelian. However, it is still finitely bicomplete and has a
costable (epi, split mono)-factorisation system.\footnote{The factorisation
given by (epi, mono) morphisms is not unique up to isomorphism, whence the
restriction to split monos---see~\cite{Fong2015}.} Note that the fact that the ring $\PID$ is a PID matters for the existence of pullbacks, as it is necessary for
submodules of free $\PID$-modules to be free---pushouts exist by self-duality
of $\fmod\PID$.

Write $\mfmod\PID$ for the prop of split monos in $\fmod\PID$. It is a
classical, although nontrivial, theorem in control theory that this category
obeys the required condition on pushouts of pullbacks \cite{Fong2015}. Hence
Theorem~\ref{thm:corelationsPROPs} yields the pushout square
\begin{equation}
  \label{eq:pushoutCorelMod1}
  \begin{aligned}
    \xymatrix@C=40pt{
      \fmod\PID + \op{\fmod \PID} \ar[d] \ar[r] & \Span{\mfmod\PID} \ar[d], \\
      \Cospan{\fmod \PID} \ar[r] &\Corel{\fmod\PID} \pullbackcorner
    }
  \end{aligned}
\end{equation}
in $\PROP$. This modular account of $\Corel{\fmod\PID}$ is relevant for the semantics of dynamical systems. When $\PID = \mathbb{R}[s,s^{-1}]$, the ring of Laurent polynomials
in some formal symbol $s$ with coefficients in the reals, the prop
$\Corel{\fmod{\mathbb{R}[s,s^{-1}]}}$ models complete linear time-invariant
discrete-time dynamical systems in $\mathbb{R}$; more details can be found
in~\cite{Fong2015}. In that paper, it is also proven that $\Corel{\fmod\PID}$ is axiomatised by the presentation of $\Cospan{\fmod \PID}$ with the addition of the law $\Bunit\poi\Bcounit \ = \ \EmptyDiag$. By Corollary~\ref{cor:presentation}, it follows that $\Bunit\poi\Bcounit \ = \ \EmptyDiag$ originates by a pullback in $\Span{\mfmod\PID}$ and in this case it is the only contribution of spans to the presentation of corelations.
\smallskip

It is worth noticing that, even though $\fmod\PID$ is not abelian, the pushout of spans and cospans over $\fmod\PID$ does not have a trivial outcome as for the prop $\F$ of functions (Example~\ref{ex:triv}). Instead, in~\cite{interactinghopf,ZanasiThesis} it is proven that we have the pushout square
\begin{equation}
  \label{eq:pushoutCorelMod2}
  \begin{aligned}
    \xymatrix@C=40pt{
      \fmod{\PID} \oplus \op{\fmod \PID} \ar[d] \ar[r] & \Span{\fmod \PID} \ar[d], \\
      \Cospan{\fmod \PID} \ar[r] &\SV\frPID \pullbackcorner
    }
  \end{aligned}
\end{equation}
in $\PROP$, where $\frPID$ is the \emph{field of fractions} of $\PID$. 

The pushout \eqref{eq:pushoutCorelMod2} is relevant for the categorical
semantics for signal flow graphs pursued in \cite{Bonchi2014b,Bonchi2015,BonchiSZ17}. Even though it is not
an instance of Theorem~\ref{thm:corelationsPROPs} or Corollary \ref{thm:corelationsAbPROP}, our developments shed light on \eqref{eq:pushoutCorelMod2} through the comparison with \eqref{eq:pushoutCorelMod1}. First, note that any element $r \in \PID$ yields a module homomorphism $x \mapsto rx$ in $\fmod \PID$ of type $1 \to 1$, represented as a string diagram $\scalar$. The key observation is that, in \eqref{eq:pushoutCorelMod2}, $\Span{\fmod \PID}$ is contributing to the axiomatisation of $\SV\frPID$ (\emph{cf.} Corollary \ref{cor:presentation}) by adding, for each $r$, an equation 
\begin{equation*}
  \scalar \poi \scalarop \ = \ \IdDiag \poi \IdDiag \text{, corresponding to a pullback}
  \vcenter{\xymatrix@=10pt{
    & 1 & \\
    1\ar[ur]^{\scalar} && 1 \ar[ul]_{\scalar} \\
    & \ar[ul]^{\IdDiag} 1 \ar[ur]_{\IdDiag} &
  }}
\end{equation*}
in $\fmod\PID$. Back to \eqref{eq:pushoutCorelMod1}, the only equations of this kind that $\Span{\mfmod \PID}$ is contributing with are those in which $\scalar$ is a \emph{split mono}, that means, when $r$ is \emph{invertible} in $\PID$. Therefore, the difference between \eqref{eq:pushoutCorelMod1} and \eqref{eq:pushoutCorelMod2} is that in the latter one is adding formal inverses $\scalarop$ also for elements $\scalar$ which are not originally invertible in $\PID$. This explains the need of the field of fractions $\frPID$ of $\PID$ in expanding the pushout object from $\Corel{\fmod\PID}$ (in  \eqref{eq:pushoutCorelMod1}) to $\Corel{\Vect{\frPID}} \cong \SV\frPID$ (in  \eqref{eq:pushoutCorelMod2}).

\subsection{Relations of Algebras over a Field}
\label{ssec.monadalg}

We give a simple example of Proposition~\ref{prop.algebras}. The category
$\vect\field$ of vector spaces and linear maps over some field $\field$, is a
complete, cocomplete, regular category. Moreover, every epi splits: every
surjective linear map has a section. 

Consider the monad $T\colon \vect\field \to \vect\field$ for algebras over
$\field$ (that is, for vector spaces equipped with a bilinear multiplication).
This maps a vector space $V$ to the direct sum $\bigoplus_{\{n \in \mathbb{N}\}}
V^{\otimes n}$. Since each map $V \mapsto V\otimes \dots \otimes V$ preserves
both epis and pushouts of epis, so does their coproduct $T$. Note also that all
epis in $\vect\field$ are regular. This means both $T$ and $T^2$ preserve
pushouts of regular epis, and hence Proposition~\ref{prop.algebras} applies.
Thus we can construct the category of relations between algebras over $\field$
as a pushout of spans and cospans. 

\subsection{Corelations of Coalgebras}
\label{ssec.comonadcoalg}

As mentioned, Proposition \ref{prop:algebrasmonad} dualises to yield a universal
construction for corelations of coalgebras for a comonad $T$. The dualised
assumptions require that $T$ and $T^2$ preserve pullbacks of monos. This is true
for instance when $T$ is a polynomial endofunctor, as it is the case for
comonads encountered in the setting of functional programming like the
environment comonad $A \mapsto A \times E$, see \cite{TarmoComonads}. Another
example is when $T$ is the comonad defined by a geometric morphism to a coregular
topos, such as $\SET$.

An interesting class of case studies for this construction stems from the analysis of state-based systems, which are typically formalised as coalgebras $X \to FX$ for an endofunctor $F \colon \catC \to \catC$~\cite{JacobsBookCoalgebra}. When $F$ is accessible and $\catC$ is locally finitely presentable, one can construct the cofree comonad $T \colon \catC \to \catC$ over $F$ through a so-called terminal sequence~\cite{Worrell99}. The comonad $T$ yields an operational semantics for $F$-systems: given $X \to FX$, it maps $X$ to the carrier $\Omega$ of the cofree $F$-coalgebra $\Omega \to F\Omega$ on $X$. 

Now, because $T$ is cofree, the condition of preserving pullbacks of monos required in Proposition~\ref{prop:algebrasmonad} can actually be checked on $F$. Indeed, the image under $T$ of a pullback diagram can be computed pointwise on its terminal sequence, which essentially involves repeated applications of the functor $F$. The condition is met for instance for coalgebras for the finite double-powerset functor $\mathcal{PP} \colon \SET \to \SET$: this functor has been used to model ground logic programs \cite{KomendantskayaMP10,KomendantskayaP16,BZ-LogProg2}, with operational semantics given by the comonad mapping a formula to its execution tree in the logic program. 

Given that bisimulation of coalgebras can be expressed both in terms of spans and cospans \cite{StatonBisimulation}, it is an interesting question  --- to be explored in future work --- whether corelations also model a sensible notion of behavioural equivalence, for well-chosen examples. 

\section{Concluding remarks} \label{sec:conclusion}

In summary, we have shown that categories of (co)relations may, under certain
general conditions, be constructed as pushouts of categories of spans and
cospans.  In particular, especially since categories of spans and cospans can
frequently be axiomatised using distributive laws, this offers a method of
constructing axiomatisations of categories of (co)relations.  Our results extend
to the setting of props, and more generally symmetric monoidal categories.
Moreover, these results are readily illustrated, unifying a diverse series of
examples drawn from algebraic theories, program semantics, quantum computation,
and control theory.

Looking forward, note that in the monoidal case the resulting (co)relation
category is a so-named hypergraph category: each object is equipped with a
special commutative Frobenius structure.  Hypergraph categories are of
increasing interest for modelling network-style diagrammatic languages, and
recent work, such as that of decorated corelations \cite{Fon16} or the
generalized relations of Marsden and Genovese \cite{MG17}, gives precise methods for
tailoring constructions of these categories towards chosen applications. Our
example on relations in categories of algebras for a monad
(Subsection~\ref{ssec.monadalg}) hints at general methods for showing the
present universal construction applies to these novel examples. We leave this as
an avenue for future work.

\bibliographystyle{plainurl}
\bibliography{lmcsBib}

\newpage \appendix


\section{Proof of Theorem \ref{thm:corelations}}\label{sec:proof}

We devote this section to give a step-by-step argument for Theorem \ref{thm:corelations}. 

\begin{prop}\label{prop:pushoutcommutes} The square~\eqref{eq:pushoutCorel} commutes. \end{prop}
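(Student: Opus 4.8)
The plan is to exploit the fact that the two legs of the square out of $\subc \+{\subc}\op{\subc}$ are identity-on-objects functors, so they automatically agree on objects; commutativity then reduces to checking that the two composites into $\Corel{\catC}$ agree on a generating set of morphisms. The category $\subc \+{\subc}\op{\subc}$ is generated under composition by the length-one zigzags, i.e.\ by morphisms $\tr{f}$ with $f$ a non-identity arrow of $\subc$ and by morphisms $\tl{g}$ with $g$ a non-identity arrow of $\subc$. Since both composites $\CospanToCorel\circ(\subc\+{\subc}\op{\subc}\to\Cospan{\catC})$ and $\SpanToCorel\circ(\subc\+{\subc}\op{\subc}\to\Span{\subc})$ are honest functors, built from the functors recalled in \eqref{eq:functorsCToSpanCospan} together with $\CospanToCorel$ (Definition~\ref{def:cospantocorel}) and $\SpanToCorel$ (Proposition~\ref{prop:spantocorel}), it suffices to verify the equation on these two families of generators.

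First take a generator $\tr{f}$ with $f\colon X\to Y$ in $\subc$. Along the bottom, $\tr{f}$ is sent to the cospan $X\tr{f}Y\tl{\id}Y$ in $\Cospan{\catC}$, and $\CospanToCorel$ sends this to the corelation it represents. Along the top, $\tr{f}$ is sent to the span $X\tl{\id}X\tr{f}Y$ in $\Span{\subc}$; pushing out a span one of whose legs is an identity gives back (up to the canonical isomorphism, which is all that matters in $\Cospan{\catC}$ and $\Corel{\catC}$) the cospan $X\tr{f}Y\tl{\id}Y$, so $\SpanToCorel$ returns the corelation represented by $X\tr{f}Y\tl{\id}Y$. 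The two results coincide. Dually, for a generator $\tl{g}$ with $g\colon Y\to X$ in $\subc$: the bottom path sends it to the cospan $X\tr{\id}X\tl{g}Y$ and thence to the corelation it represents; the top path sends it to the span $X\tl{g}Y\tr{\id}Y$, whose pushout is again the cospan $X\tr{\id}X\tl{g}Y$, and $\SpanToCorel$ returns the corelation this represents. Once more the two results agree.

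Since the two composite functors agree on all generating morphisms and on all objects, they are equal, which is exactly the statement that \eqref{eq:pushoutCorel} commutes. I expect no genuine obstacle here: the only thing requiring attention is direction-convention bookkeeping for the four functors of \eqref{eq:functorsCToSpanCospan} and the routine observation that a pushout of a span with one identity leg is the "reflected" cospan. The real content of Theorem~\ref{thm:corelations} lies in the universal property verified subsequently, not in this commutativity lemma.
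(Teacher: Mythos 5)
Your proposal is correct and follows essentially the same route as the paper: reduce to the length-one generators of $\subc\+{\subc}\op{\subc}$ (the paper phrases this as checking the two injections of $\subc$ and $\op{\subc}$, using the pushout's universal property) and then observe that the pushout of a span with one identity leg is the reflected cospan. The only cosmetic difference is that $\SpanToCorel$ formally returns the jointly-in-$\Cepi$ part of that pushout rather than the pushout itself, but this represents the same corelation by Definition~\ref{def:corel}, a point the paper's proof also leaves implicit.
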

\begin{proof}
  As ${\subc \+{\subc} \op{\subc}}$ is a pushout, it is enough to show
  \eqref{eq:pushoutCorel} commutes on the two injections of $\subc$, $\op{\subc}$ into ${\subc
  \+{\subc} \op{\subc}}$. This means that we have to show, for any $f \colon
  a\to b$ in $\subc$, that
  \[
    \SpanToCorel(\tl{\id}\tr{f}) = \CospanToCorel(\tr{f}\tl{\id}) \quad\text{ and }\quad
    \SpanToCorel(\tl{f}\tr{\id}) = \CospanToCorel(\tr{\id}\tl{f}).
  \]
  These are symmetric, so it suffices to check one.  This follows immediately
  from the fact that the pushout of $\tl{\id}\tr{f}$ is $\tr{f}\tl{\id}$.
\end{proof}

Suppose we have a cocone over $\Cospan\catC \longleftarrow \subc\+{\subc}\subc
\longrightarrow \Spanc\subc$.  That is, suppose we have the commutative square:
\begin{equation}
\label{eq:arbitrary}
\tag{$\dag$}
\raise15pt\hbox{$
\xymatrix@C=40pt{
{\subc \+{\subc} \op{\subc}} \ar[r] \ar[d] & {\Spanc{\subc}} \ar[d]^{\Psi} \\
{\Cospan{\catC}} \ar[r]_-{\Phi} & {\mathcal{X}}.
}$}
\end{equation}
We prove two lemmas, from which the main theorem follows easily.
\begin{lem} \label{lemma:arbitraryCommutativeDiag} 
  If $\tr{p} \tl{q} $ is a cospan in $\subc$ with pullback $\tl{f} \tr{g}$ (in
  $\catC$), then $\Psi(\tl{f} \tr{g}) = \Phi(\tr{p} \tl{q} )$. Similarly, if
  $ \tl{f} \tr{g} $ is a span in $\subc$ with pushout $\tr{p}\tl{q}$ (in
  $\catC$), then $\Psi(\tl{f} \tr{g} ) = \Phi(\tr{p} \tl{q})$.  
\end{lem}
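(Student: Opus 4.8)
The plan is to read both claims straight off the commutativity of \eqref{eq:arbitrary}, by recognising the cospan $\tr{p}\tl{q}$ and its pullback span $\tl{f}\tr{g}$ as the two images of a single morphism of $\subc\+{\subc}\op{\subc}$ under the two functors emanating from it.

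For the first claim I would take the cospan $a\tr{p}c\tl{q}b$ with $p,q\in\subc$ and form the morphism $\zeta\colon a\to b$ of $\subc\+{\subc}\op{\subc}$ given by the zigzag $a\tr{p}c\tl{q}b$, i.e.\ the composite of $p\in\subc$ with $q$ regarded as an arrow $c\to b$ of $\op{\subc}$. Recall that the functors $\subc\+{\subc}\op{\subc}\to\Cospan{\catC}$ and $\subc\+{\subc}\op{\subc}\to\Spanc{\subc}$ act on morphisms by pointwise application of the four identity-on-objects functors of \eqref{eq:functorsCToSpanCospan}. Applied pointwise into $\Cospan{\catC}$, $\zeta$ becomes the composite $(\tr{p}\tl{\id})\poi(\tr{\id}\tl{q})$, whose defining pushout is taken along identities and hence returns the cospan $\tr{p}\tl{q}$ itself. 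Applied pointwise into $\Spanc{\subc}$, $\zeta$ becomes the composite $(\tl{\id}\tr{p})\poi(\tl{q}\tr{\id})$, whose defining pullback is exactly the pullback of $\tr{p}\tl{q}$ in $\catC$; since $\subc$ is stable under pullback (Assumption~\ref{ass:thcorelations}), this pullback span $\tl{f}\tr{g}$ genuinely lies in $\Spanc{\subc}$, so $\Psi(\tl{f}\tr{g})$ is defined. Chasing $\zeta$ around the commuting square \eqref{eq:arbitrary} then gives $\Phi(\tr{p}\tl{q})=\Psi(\tl{f}\tr{g})$.

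For the second claim I would argue dually: given a span $a\tl{f}d\tr{g}b$ in $\subc$, form the morphism $\eta\colon a\to b$ of $\subc\+{\subc}\op{\subc}$ given by the zigzag $a\tl{f}d\tr{g}b$, that is, $f$ regarded as an arrow $a\to d$ of $\op{\subc}$ composed with $g\in\subc$. Its pointwise image in $\Spanc{\subc}$ is $(\tl{f}\tr{\id})\poi(\tl{\id}\tr{g})=\tl{f}\tr{g}$ (pullback along identities), while its pointwise image in $\Cospan{\catC}$ is $(\tr{\id}\tl{f})\poi(\tr{g}\tl{\id})$, whose defining pushout is precisely the pushout $\tr{p}\tl{q}$ of $\tl{f}\tr{g}$ in $\catC$. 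Commutativity of \eqref{eq:arbitrary} applied to $\eta$ then yields $\Psi(\tl{f}\tr{g})=\Phi(\tr{p}\tl{q})$.

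There is no real obstacle here; the proof is a short diagram chase in \eqref{eq:arbitrary} once the two images of $\zeta$ (resp.\ $\eta$) have been identified. The only points demanding a moment's care are verifying that composing a (co)span with an identity (co)span leaves it unchanged --- so that the pointwise-application recipe produces the stated cospan and span --- and observing that stability of $\subc$ under pullback is exactly what guarantees the pullback span lies in $\Spanc{\subc}$, so that $\Psi$ may legitimately be applied to it.
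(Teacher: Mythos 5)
Your proposal is correct and is essentially the paper's own argument: both proofs identify the cospan $\tr{p}\tl{q}$ (resp.\ the span $\tl{f}\tr{g}$) with a zigzag in $\subc\+{\subc}\op{\subc}$, compute its two images under the functors to $\Cospan{\catC}$ and $\Spanc{\subc}$ (colimit and limit of the zigzag, respectively), and conclude by the commutativity of \eqref{eq:arbitrary}. Your explicit check that composing with identity (co)spans is harmless, and that pullback-stability of $\subc$ keeps the resulting span inside $\Spanc{\subc}$, merely spells out details the paper leaves implicit.
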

\begin{proof}
  Consider $\tr{p}\tl{q} \in \subc \+{\subc} \op{\subc}$.  Its image in
  $\mathcal X$ via the lower left corner of the commutative square
  \eqref{eq:arbitrary} is $\Phi(\tr{p}\tl{q})$ while, recalling that
  $\tr{p}\tl{q}  \in \subc \+{\subc} \op{\subc}$ is mapped to $\tl{f}\tr{g}$ in $\Span{\subc}$, its image via the upper right
  corner is $\Psi(\tl{f}\tr{g})$. Thus $\Phi(\tr{p}\tl{q}) = \Psi(\tl{f}\tr{g})$.

  The second claim is analogous, beginning instead with the span
  $\tl{f}\tr{g}$.
\end{proof}

\begin{lem} \label{lemma:functordescendstocorel}
  If $\tr{p_1}\tl{q_1}$ and $\tr{p_2}\tl{q_2}$
  are cospans in $\catC$ such that $\CospanToCorel(\tr{p_1}
  \tl{q_1})=\CospanToCorel(\tr{p_2} \tl{q_2})$, then $\Phi(\tr{p_1} \tl{q_1})
  = \Phi(\tr{p_2} \tl{q_2})$.
\end{lem}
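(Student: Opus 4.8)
The plan is to show that the relation ``$\Phi$ assigns the same value'' on parallel cospans is an equivalence relation that contains the relation~\eqref{eq:defcorelations} generating corelation-equivalence. Since $\CospanToCorel(\tr{p_1}\tl{q_1}) = \CospanToCorel(\tr{p_2}\tl{q_2})$ says precisely that the two cospans lie in the same class under the symmetric, transitive closure of~\eqref{eq:defcorelations} (Definition~\ref{def:corel}), the lemma then follows immediately. As ``$\Phi$ assigns the same value'' is manifestly reflexive, symmetric and transitive, the only real work is to show that it is respected by a single generating step.

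The key subcomputation is the value of $\Phi$ on the ``collapsing'' cospans $N \tr{m} N' \tl{m} N$, for $m \maps N \to N'$ in $\Cmono$. Since $\Cmono$ consists of monos, the pullback of this cospan in $\catC$ is the identity span $N \tl{\id} N \tr{\id} N$; and since $\Cmono \subseteq \subc$ and $\subc$ contains all isomorphisms (Assumption~\ref{ass:thcorelations}), both the cospan and its pullback live in $\subc$. Lemma~\ref{lemma:arbitraryCommutativeDiag} then gives $\Phi(\tr{m}\tl{m}) = \Psi(\tl{\id}\tr{\id}) = \id$, the last equality because $\tl{\id}\tr{\id}$ is the identity span in $\Spanc{\subc}$ and functors preserve identities.

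Now take cospans $X \tr{f} N \tl{g} Y$ and $X \tr{f'} N' \tl{g'} Y$ related by some $m \in \Cmono$ as in~\eqref{eq:defcorelations}, so that $f \poi m = f'$ and $g \poi m = g'$ in $\catC$. Using functoriality of the inclusions $\catC \to \Cospan{\catC}$ and $\op{\catC} \to \Cospan{\catC}$ from~\eqref{eq:functorsCToSpanCospan}, one checks in $\Cospan{\catC}$ that
\[
  \tr{f}\tl{g} \;=\; (\tr{f}\tl{\id}) \poi (\tr{\id}\tl{g}) \qquad\text{and}\qquad \tr{f'}\tl{g'} \;=\; (\tr{f}\tl{\id}) \poi (\tr{m}\tl{m}) \poi (\tr{\id}\tl{g}),
\]
the second identity holding because $f' = f\poi m$, $g' = g\poi m$ and $(\tr{m}\tl{\id})\poi(\tr{\id}\tl{m}) = \tr{m}\tl{m}$ (the latter composite being computed by the pushout over $N'$). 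Applying the functor $\Phi$ and using $\Phi(\tr{m}\tl{m}) = \id$ from the previous paragraph yields $\Phi(\tr{f'}\tl{g'}) = \Phi(\tr{f}\tl{g})$. Hence ``$\Phi$ assigns the same value'' contains~\eqref{eq:defcorelations}, and therefore its symmetric, transitive closure, which is exactly the claim.

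The main obstacle is purely a matter of bookkeeping: one must track the direction of each elementary (co)span in the decomposition of $\tr{f'}\tl{g'}$, check each composite against the defining pushouts of $\Cospan{\catC}$, and keep straight that $g\poi m = g'$ translates, on the backward legs, into $\tl{g'}$ factoring as $\tl{m}$ followed by $\tl{g}$ in $\op{\catC}$. Beyond this routine verification, nothing subtle is involved; in particular the factorisation system $(\Cepi,\Cmono)$ plays no role here, only the fact that $\Cmono \subseteq \subc$ consists of monos.
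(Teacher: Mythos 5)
Your proof is correct and rests on exactly the same engine as the paper's: the computation $\Phi(\tr{m}\tl{m}) = \Psi(\tl{\id}\tr{\id}) = \id$ for $m \in \Cmono$ (via Lemma~\ref{lemma:arbitraryCommutativeDiag} and the fact that monos pull back along themselves to identities), combined with the decomposition of a cospan into $(\tr{f}\tl{\id})\poi(\tr{m}\tl{m})\poi(\tr{\id}\tl{g})$. The only difference is organisational --- you unfold the symmetric, transitive closure of~\eqref{eq:defcorelations} directly, whereas the paper invokes Proposition~\ref{lemma:charCospanToCorel} to compare both cospans against a common jointly-in-$\Cepi$ representative --- and either bookkeeping is fine.
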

\begin{proof}
Suppose $\CospanToCorel(\tr{p_1}\tl{q_1}) =
\CospanToCorel(\tr{p_2}\tl{q_2})$ as per hypothesis. Then by Proposition \ref{lemma:charCospanToCorel} there exists
$\tr{m_1},\tr{m_2} \in \Cmono$ and $\tr{p}\tl{q} \in \Cospan\catC$ such that
\[
  \tr{p_1}\tl{q_1} \ = \ \tr{p}\tr{m_1}\tl{m_1}\tl{q} \quad\mbox{and}\quad
\tr{p_2}\tl{q_2} \ = \ \tr{p}\tr{m_2}\tl{m_2}\tl{q}.
\]
Then
\begin{eqnarray*}
\Phi(\tr{p_1}\tl{q_1}) &=& \Phi(\tr{p}\tr{m_1}\tl{m_1}\tl{q})
\\ &=& \Phi(\tr{p}\tl{id})\poi\Phi(\tr{m_1}\tl{m_1})\poi\Phi(\tr{id}\tl{q})
\\ &\overset{(\clubsuit)}=&
\Phi(\tr{p}\tl{id})\poi\Psi(\tl{\id}\tr{\id})\poi\Phi(\tr{\id}\tl{q})
 \\  &=& \Phi(\tr{p}\tl{id})\poi\Phi(\tr{\id}\tl{q})
 \\  &=& \Phi(\tr{p}\tl{q}),
\end{eqnarray*}
and similarly for $\tr{p_2}\tl{q_2}$.
The equality $(\clubsuit)$ holds because, by Assumption~\ref{ass:thcorelations}, $\Cmono \subseteq \subc$ and $\tr{m_1 \in \Cmono}$ is mono, thus the pullback of
$\tr{m_1}\tl{m_1}$ is $\tl{\id}\tr{\id}$ and via Lemma~\ref{lemma:arbitraryCommutativeDiag} $\Phi(\tr{m_1}\tl{m_1}) = \Psi(\tl{\id}\tr{\id})$. \end{proof}

\begin{proof}[Proof of Theorem~\ref{thm:corelations}]
Suppose we have a commutative
diagram \eqref{eq:arbitrary}. It suffices to show that there exists a
functor $\theta \colon \Corel{\catC}\to\mathcal{X}$ with
$\theta\CospanToCorel=\Phi$ and $\theta\SpanToCorel =
\Psi$. Uniqueness is automatic by fullness (Proposition~\ref{prop:CospanToCorelFull})
and bijectivity on objects of~$\CospanToCorel$.

Given a corelation $a$, fullness yields a cospan $\tr{f}\tl{g}$ such that $\CospanToCorel(\tr{f}\tl{g}) = a$. We then define $\theta(a) = \Phi(\tr{f}\tl{g})$. This is well-defined by Lemma \ref{lemma:functordescendstocorel}.

For commutativity, clearly $\theta\CospanToCorel = \Phi$.  Moreover, $\theta\SpanToCorel =
\Psi$: given a span $\tl{f}\tr{g}$ in $\Cmono$, let $\tr{p}\tl{q}$ be its
pushout span in $\catC$. Thus by
Lemma~\ref{lemma:arbitraryCommutativeDiag},
\[
  \Psi(\tl{f}\tr{g}) = \Phi(\tr{p}\tl{q})
  =\theta\CospanToCorel(\tr{p}\tl{q})=\theta\SpanToCorel(\tl{f}\tr{g}).
  \tag*{\qedhere}
\]
\end{proof}

\section{Proof of Proposition~\ref{thm:corelationsSMCs}}\label{app:spansPROP}

To begin, we discuss how to put monoidal structures on $\Cospan{\catC}$,
$\Corel{\catC}$, and $\Spanc{\subc}$, and show that $\CospanToCorel$ and
$\SpanToCorel$ are strict symmetric monoidal functors in this case.

  \begin{prop}
    Let $(\catC,\tns)$ be a symmetric monoidal category with pullbacks, and let
    $\subc$ be a sub-symmetric monoidal category of $\catC$ containing all
    isomorphisms and stable under pullback. If $\tns$ preserves pullbacks in
    $\subc$, then $(\Spanc{\subc},\tns)$ is a symmetric monoidal category.
  \end{prop}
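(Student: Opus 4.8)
The plan is to transport the symmetric monoidal structure of $(\catC,\tns)$ onto $\Span{\subc}$, the hypothesis that $\tns$ preserves pullbacks in $\subc$ being exactly what is needed to make the tensor bifunctorial on spans. First I would define the product: on objects as in $\catC$, and on a pair of spans $X \tl{f} N \tr{g} Y$ and $X' \tl{f'} N' \tr{g'} Y'$ in $\subc$ by
$X \tns X' \tl{f \tns f'} N \tns N' \tr{g \tns g'} Y \tns Y'$.
This again lies in $\Span{\subc}$ since $\subc$ is closed under $\tns$; it respects isomorphism of spans because an isomorphism $u$ of spans induces an isomorphism $u \tns u'$ of their tensors, so $\tns$ is well defined on hom-sets; and the tensor of two identity spans is again an identity span, so identities are preserved.

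The hard part will be showing that $\tns \maps \Span{\subc}\times\Span{\subc}\to\Span{\subc}$ preserves composition. Given composable spans $X \tl{f} N \tr{g} Y \tl{h} M \tr{k} Z$ and $X' \tl{f'} N' \tr{g'} Y' \tl{h'} M' \tr{k'} Z'$ in $\subc$, the composite of the two tensored spans is built from the pullback of $g \tns g'$ and $h \tns h'$ in $\catC$. By the preservation hypothesis the canonical comparison map $(N \times_Y M)\tns(N'\times_{Y'}M') \to (N\tns N')\times_{Y\tns Y'}(M\tns M')$ is an isomorphism, and it carries the pairwise tensors $\pi_1\tns\pi_1'$, $\pi_2\tns\pi_2'$ of the pullback projections to the projections of the right-hand pullback. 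Combining this with the interchange law of $\catC$ then shows that the composite of the tensored spans is isomorphic to the tensor of the two composite spans, i.e. they are equal in $\Span{\subc}$.

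For the coherence data I would use that the associator, unitors, and symmetry of $\catC$, being isomorphisms, lie in $\subc$, and transport them along the identity-on-objects functor $J \maps \subc \to \Span{\subc}$, $\tr{f}\mapsto\tl{\id}\tr{f}$, and dually along $\op{J}\maps\op{\subc}\to\Span{\subc}$, $\tl{f}\mapsto\tl{f}\tr{\id}$. A direct check shows that $J$ and $\op{J}$ preserve composition and satisfy $J(f)\tns J(g)=J(f\tns g)$ (every pullback involved being along an identity, hence trivial), so both are strict symmetric monoidal identity-on-objects functors. The $J$-images of the structure isomorphisms of $\catC$ give those of $\Span{\subc}$; they are invertible because $J$ is a functor, and the pentagon, triangle, and hexagon equations hold in $\Span{\subc}$ because they hold in $\catC$ and $J$ preserves composition and $\tns$. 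For naturality I would exploit that every morphism of $\Span{\subc}$ factors as $\op{J}(f)\poi J(g)$ with $f,g\in\subc$, so that, by bifunctoriality of $\tns$, naturality of each structure map reduces to its naturality against morphisms in the image of $J$ or of $\op{J}$; the former follows from naturality of the corresponding map in $\subc$ together with $J$ being strict monoidal, the latter analogously via $\op{J}$.

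I expect the composition-preservation step of the second paragraph to be the sole genuine obstacle: it is the only point where the preservation hypothesis enters, and every remaining piece of structure and every axiom is inherited from $(\catC,\tns)$ through the strict monoidal functor $J$.
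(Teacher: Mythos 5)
Your proposal is correct and follows essentially the same route as the paper: define the tensor componentwise on spans, observe that bifunctoriality on composites is exactly the hypothesis that $\tns$ preserves pullbacks in $\subc$, and inherit the coherence isomorphisms from $\catC$ via the fact that $\subc$ contains all isomorphisms. The only difference is that you spell out the coherence and naturality checks (via the factorisation of spans through $J$ and $\op{J}$) that the paper dismisses as easy.
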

  \begin{proof}
    Define $\tns\maps \Spanc{\subc} \times \Spanc{\subc} \longrightarrow \Spanc{\subc}$ to take a pair of objects $(X,X')$ in $\Spanc{\subc}$ to $X \tns X'$, and
    similarly map a pair of spans to the component-wise monoidal product. This
    is well defined since $\subc$ is closed under $\tns$. We first need to show
    that this map is functorial. That is, given two pairs $(X \leftarrow N \to Y,\:X' \leftarrow N' \to Y')$ and
    $(Y \leftarrow M \to Z,\: Y' \leftarrow M' \to Z')$ of spans in $\subc$, we need
    to show that the composite of their images under $\tns$:
    \[
      X\tns X' \longleftarrow (N\tns N') \times_{Y\tns Y'} (M\tns M')
      \longrightarrow Z \tns Z'
    \]
    is isomorphic to the image under $\tns$ of their composite:
    \[
      X \tns X' \longleftarrow (N\times_YM) \tns (N'\times_{Y'}M')
      \longrightarrow Z \tns Z'.
    \]
    This is precisely the hypothesis that $\oplus$ preserves pullbacks in
    $\subc$.

    It then remains to show that we have coherence maps, and these obey the
    requisite equations. Note that $\subc$ contains all isomorphisms in $\catC$,
    and hence the coherence maps for $(\catC,\tns)$ can be considered as
    morphisms in $\Spanc{\subc}$. It is easy to check these give the coherence
    of $\Spanc{\subc}$.
  \end{proof}
 
Note that dualising the above argument with $\subc = \catC$ yields the fact that
$(\Cospan{\catC},\tns)$ is a symmetric monoidal category whenever $\tns$
preseves pushouts.  Also note that the inclusions $\subc \to \Spanc{\subc}$ and
$\op{\subc} \to \Spanc{\subc}$ are strict monoidal functors.

\begin{prop}
  If $\catC$ is a symmetric monoidal category with a costable factorisation
  system, and $\Cmono$ is closed under $\tns$, then $\Corel{\catC}$ is a
  symmetric monoidal category. Moreover, the quotient functor $\CospanToCorel\maps \Cospan{\catC} \to \Corel{\catC}$ is a strict monoidal functor.
\end{prop}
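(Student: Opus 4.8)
The plan is to transport the symmetric monoidal structure on $\Cospan{\catC}$ along the full, identity-on-objects quotient functor $\CospanToCorel\maps\Cospan{\catC}\to\Corel{\catC}$ (Definition~\ref{def:cospantocorel}, Proposition~\ref{prop:CospanToCorelFull}). Recall that $(\Cospan{\catC},\tns)$ is a symmetric monoidal category (this uses that $\tns$ preserves pushouts, which is part of the ambient setting of Proposition~\ref{thm:corelationsSMCs}), so the task reduces to showing that this structure is compatible with the congruence on hom-sets induced by $\CospanToCorel$; then the monoidal structure on $\Corel{\catC}$, and the strict monoidality of $\CospanToCorel$, follow formally.

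The one lemma requiring proof is the following: \emph{if $c,c'\maps X\to Y$ are cospans with $\CospanToCorel(c)=\CospanToCorel(c')$ and $d\maps X'\to Y'$ is an arbitrary cospan, then $\CospanToCorel(c\tns d)=\CospanToCorel(c'\tns d)$, and dually $\CospanToCorel(d\tns c)=\CospanToCorel(d\tns c')$.} By Definition~\ref{def:corel}, $\CospanToCorel(c)=\CospanToCorel(c')$ means $c$ and $c'$ are joined by a finite zigzag of cospans, consecutive members being related by a single morphism of $\Cmono$ between their apices that commutes with the legs (in one direction or the other). It thus suffices to handle one such step: write $c=(X\tr{f}N\tl{g}Y)$, $c'=(X\tr{f'}N'\tl{g'}Y)$ and $m\maps N\to N'$ in $\Cmono$ with $f\poi m=f'$ and $g\poi m=g'$, and set $d=(X'\tr{h}D\tl{k}Y')$. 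Then $m\tns\id_D\maps N\tns D\to N'\tns D$ commutes with the legs of $c\tns d$ and of $c'\tns d$ by bifunctoriality of $\tns$ in $\catC$, and it lies in $\Cmono$ since $\Cmono$ is closed under $\tns$ and contains all identities (which are isomorphisms). Hence $c\tns d$ and $c'\tns d$ represent the same corelation, and chaining along the zigzag gives the claim; the case of $d\tns(-)$ is symmetric.

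Granting the lemma, I would take $X\tns Y$ on objects of $\Corel{\catC}$ (inherited from $\catC$) and, using fullness of $\CospanToCorel$ to choose representatives, set $a\tns b:=\CospanToCorel(c\tns d)$ for any $c,d$ with $\CospanToCorel(c)=a$ and $\CospanToCorel(d)=b$; the lemma makes this independent of the choices. Functoriality reduces to the corresponding facts in $\Cospan{\catC}$ by applying $\CospanToCorel$: identities are preserved because $\id_X\tns\id_Y=\id_{X\tns Y}$ already in $\catC$, and the interchange law in $\Corel{\catC}$ is the $\CospanToCorel$-image of the interchange law in $\Cospan{\catC}$, via $(a_1\tns b_1)\poi(a_2\tns b_2)=\CospanToCorel\bigl((c_1\tns d_1)\poi(c_2\tns d_2)\bigr)=\CospanToCorel\bigl((c_1\poi c_2)\tns(d_1\poi d_2)\bigr)=(a_1\poi a_2)\tns(b_1\poi b_2)$. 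The associator, unitors and symmetry of $\Corel{\catC}$ are then defined as the $\CospanToCorel$-images of those of $\Cospan{\catC}$: they are isomorphisms (images of isomorphisms under a functor), are natural, and satisfy the triangle, pentagon and hexagon identities together with $\sigma_{Y,X}\poi\sigma_{X,Y}=\id$, all because $\CospanToCorel$ is a functor and these hold in $\Cospan{\catC}$. Strict monoidality of $\CospanToCorel$ is then immediate: it is identity-on-objects, so preserves $\tns$ and $I$ on objects; $\CospanToCorel(c\tns d)=\CospanToCorel(c)\tns\CospanToCorel(d)$ by the very definition of $\tns$ on $\Corel{\catC}$; and the coherence isomorphisms are preserved by construction.

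I expect the only real content to be the lemma, and the subtlety there is merely that the defining relation on cospans is not \emph{a priori} symmetric or transitive, so one must propagate the argument along a zigzag rather than argue from a single comparison map; it is also worth recording that interchange in $\Corel{\catC}$ genuinely depends on $(\Cospan{\catC},\tns)$ being monoidal, which is the only place the hypothesis that $\tns$ preserves pushouts is needed. Everything else is bookkeeping inherited from $\catC$ and $\Cospan{\catC}$.
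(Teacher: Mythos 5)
Your proposal is correct and follows essentially the same route as the paper: define $a\tns b$ on cospan representatives and check independence of representatives by observing that the comparison map $m\tns\id$ (or $m_1\tns m_2$ in the paper's version) stays in $\Cmono$ because $\Cmono$ is closed under $\tns$, after which strict monoidality of $\CospanToCorel$ is definitional. Your explicit remark that one must propagate along the zigzag generating the equivalence, and that interchange silently uses $(\Cospan{\catC},\tns)$ being monoidal (hence $\tns$ preserving pushouts, a hypothesis inherited from the ambient Proposition~\ref{thm:corelationsSMCs}), is a slightly more careful bookkeeping of points the paper leaves implicit.
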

\begin{proof}
  The first task is to show that $\Corel{\catC}$ is indeed a symmetric monoidal
  category. We show that $\tns$ induces a monoidal product, which we shall also
  write $\tns$, on $\Corel{\catC}$. Given two corelations $a$ and $b$, with
  representatives $\tr{f}\tl{g}$ and $\tr{h}\tl{k}$ we define their monoidal
  product $a \tns b$ to be the corelation represented by the cospan $\tr{f\tns
  h} \tl{g \tns k}$.  This is well defined: given $\tr{f'}\tl{g'}$,
  $\tr{h'}\tl{k'}$ and $m_1,m_2$ in $\Cmono$ such that $f'= f;m_1$, $g' =
  g;m_1$, $h'=h;m_2$, $k' = k;m_2$, the monoidality of $\tns$ in $\catC$ implies
  $f'\tns g' = (f\tns g);(m_1 \tns m_2)$ and $h'\tns k' = (h\tns k);(m_1 \tns
  m_2)$. Since $\Cmono$ is closed under $\tns$, $m_1 \oplus m_2$ again lies in
  $\tns$, and the product corelation is independent of choice of
  representatives.
  
  To show that $\CospanToCorel$ is a strict monoidal functor we just need to
  check $\CospanToCorel(a \tns b) =\CospanToCorel a \tns \CospanToCorel b$,
  where $a$ and $b$ are cospans. This follows immediately from the definition:
  the monoidal product of the corelations that two cospans represent is by
  definition the corelation represented by the monoidal product of the two
  cospans.
\end{proof}

\begin{prop}
  $\SpanToCorel\maps \Spanc{\subc} \to \Corel{\catC}$ is a strict monoidal
  functor.
\end{prop}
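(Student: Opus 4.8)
The plan is to verify the three conditions defining a strict monoidal functor: preservation of the monoidal product on objects, preservation of it on morphisms, and preservation of the coherence isomorphisms. The object part is immediate: $\SpanToCorel$ is identity-on-objects and the monoidal products of $\Spanc{\subc}$ and $\Corel{\catC}$ both restrict to $\tns$ on objects, so $\SpanToCorel(X\tns X') = X\tns X' = \SpanToCorel(X)\tns\SpanToCorel(X')$ and $\SpanToCorel(I)=I$ hold on the nose.

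The engine of the argument will be a single observation: for a span $a$ in $\subc$, writing $\widehat a$ for the cospan in $\catC$ obtained by pushing $a$ out, one has $\SpanToCorel(a) = \CospanToCorel(\widehat a)$. Indeed $\SpanToCorel(a)$ is, by definition, the corelation represented by the jointly-in-$\Cepi$ part of $\widehat a$; factoring the copairing of $\widehat a$ as $e\poi m$ with $e\in\Cepi$ and $m\in\Cmono$, the mono $m$ exhibits, as in~\eqref{eq:defcorelations}, that this jointly-in-$\Cepi$ cospan and $\widehat a$ itself represent the same corelation. (One should note that $\widehat{\,\cdot\,}$ is merely a map on hom-sets and not a functor, but the displayed equation is valid for each fixed $a$, which is all that is needed.)

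With this in hand I would dispatch the morphism part by a short chain of equalities. Given spans $a, b$ in $\subc$, the product $a\tns b$ is again a span in $\subc$ since $\subc$ is closed under $\tns$, and the hypothesis that $\tns$ preserves pushouts in $\catC$ yields a canonical isomorphism $\widehat{a\tns b}\cong\widehat a\tns\widehat b$, hence an equality in $\Cospan{\catC}$. Then, using the observation above at the two outer steps and the strict monoidality of $\CospanToCorel$ (established just above) in the middle,
\[
  \SpanToCorel(a\tns b) = \CospanToCorel(\widehat{a\tns b}) = \CospanToCorel(\widehat a\tns\widehat b) = \CospanToCorel(\widehat a)\tns\CospanToCorel(\widehat b) = \SpanToCorel(a)\tns\SpanToCorel(b).
\]
For the coherence isomorphisms, I would use that those of $\Spanc{\subc}$ are the spans $\tl{\id}\tr{\phi}$ with $\phi$ a coherence map of $(\catC,\tns)$, while those of $\Corel{\catC}$ are the $\CospanToCorel$-images of the coherence maps $\tr{\phi}\tl{\id}$ of $\Cospan{\catC}$; since the pushout of $\tl{\id}\tr{\phi}$ is $\tr{\phi}\tl{\id}$, the same observation gives $\SpanToCorel(\tl{\id}\tr{\phi}) = \CospanToCorel(\tr{\phi}\tl{\id})$, so coherence maps are sent to coherence maps.

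I expect the only genuinely delicate point to be the observation of the second paragraph, namely the realisation that passing to the jointly-in-$\Cepi$ part may be ignored because it replaces a cospan by a corelation-equivalent one; once that is in place, everything reduces to bookkeeping with the pushout-preservation hypothesis and the already-proved monoidality of $\CospanToCorel$.
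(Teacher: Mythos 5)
Your proposal is correct and follows essentially the same route as the paper: both arguments reduce the morphism condition to comparing the two pushout cospans $(X\tns X')+_{N\tns N'}(Y\tns Y')$ and $(X+_N Y)\tns(X'+_{N'}Y')$, which are identified by the hypothesis that $\tns$ preserves pushouts, and both rely (yours explicitly, the paper's implicitly) on the fact that a cospan and the jointly-in-$\Cepi$ part of its factorisation represent the same corelation. Your additional explicit checks of the object part and the coherence maps are harmless bookkeeping that the paper leaves tacit.
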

\begin{proof}
  Again, we just need to check that $\SpanToCorel(a\tns b) = \SpanToCorel a \tns
  \SpanToCorel b$. For this we need that the monoidal product preserves pushouts. 
  Indeed, given spans $a = X \tl{f} N \tr{g} Y$ and $b = X' \tl{f'} N'
  \tr{g'} Y'$, we have $\SpanToCorel(a\tns b)$ represented by the cospan 
  \[
    X\tns X' \longrightarrow (X\tns X')+_{(N \tns N')}(Y\tns Y')
    \longleftarrow Y \tns Y',
  \]
  and $\SpanToCorel a \tns \SpanToCorel b$ represented by the cospan
  \[
    X\tns X' \longrightarrow (X+_{N}Y)\tns (X'+_{N'}Y')
    \longleftarrow Y \tns Y'.
  \]
  These cospans are isomorphic by the fact $\tns$ preserves pushouts, and
  hence represent the same corelation.
\end{proof}

Now having described how to interpret \eqref{eq:pushoutCorel} in a category of
symmetric monoidal categories, it remains to show that it is a pushout.

\begin{proof}[Proof of Proposition \ref{thm:corelationsSMCs}]
  From Theorem~\ref{thm:corelations}, we know that \eqref{eq:pushoutCorel}
  commutes, and that given some other cocone
  \[
    \raise15pt\hbox{$
      \xymatrix@C=40pt{
	{\subc \+{\catC} \op{\subc}} \ar[r] \ar[d] & {\Spanc{\subc}} \ar[d]^{\Psi} \\
	{\Cospan{\catC}} \ar[r]_-{\Phi} & {\mathcal{X}},
      }$}
  \]
  there exists a unique functor $\theta$ from $\Corel{\catC}$ to $\mathcal{X}$.
  All we need do here is check that $\theta$ is a (lax, strong, strict) monoidal
  functor whenever $\Phi$ and $\Psi$ are (lax, strong, strict) monoidal
  functors. In fact, whether $\theta$ is lax, strong, or strict depends only on
  whether $\Phi$ is lax, strong, or strict.

  Indeed, suppose we have corelations $a\colon X \to Y$ and $b\colon X'\to Y'$,
  and write $\tilde a$ and $\tilde b$ for cospans that represent them. Recall
  that by definition $\theta a = \Phi \tilde a$. Let the coherence maps
  $\phi_{X,X'}\colon \Phi X \tns \Phi X' \to \Phi(X\tns X')$ of $\Phi$ also be
  the coherence maps of $\theta$. Then the coherence of $\Phi$ states that
  \[
    \xymatrix@C=40pt{
      \Phi X \tns \Phi X' \ar[r]^{\phi_{X,X'}} \ar[d]_{\theta a \tns \theta b =
      \Phi \tilde a \tns \Phi \tilde b}
      & \Phi (X \tns X') \ar[d]^{\theta(a\tns b) = \Phi(\tilde
      a \tns \tilde b)} \\
      \Phi Y \tns \Phi Y' \ar[r]^{\phi_{Y,Y'}} & \Phi (Y \tns Y')
      }
  \]
  commutes. This shows that $\theta$ is a monoidal functor of the same type as
  $\Phi$, and hence proves the theorem.
\end{proof}

\end{document}